\definecolor{linkColor}{HTML}{E74C3C}
\definecolor{pearcomp}{HTML}{B97E29}
\definecolor{citeColor}{HTML}{2980B9}
\definecolor{urlColor}{HTML}{1D2DEC}
\definecolor{conjColor}{HTML}{9ab569}
\tikzset{
    right angle quadrant/.code={
        \pgfmathsetmacro\quadranta{{1,1,-1,-1}[#1-1]}     
        \pgfmathsetmacro\quadrantb{{1,-1,-1,1}[#1-1]}},
    right angle quadrant=1, 
    right angle length/.code={\def\rightanglelength{#1}},   
    right angle length=2ex, 
    right angle symbol/.style n args={3}{
        insert path={
            let \p0 = ($(#1)!(#3)!(#2)$) in     
                let \p1 = ($(\p0)!\quadranta*\rightanglelength!(#3)$), 
                \p2 = ($(\p0)!\quadrantb*\rightanglelength!(#2)$) in 
                let \p3 = ($(\p1)+(\p2)-(\p0)$) in  
            (\p1) -- (\p3) -- (\p2)
        }
    }
}
\theoremstyle{plain}
\newtheorem{lemma}{\textbf{Lemma}}
\newtheorem{theorem}{\textbf{Theorem}}\setcounter{theorem}{0}
\newtheorem{assumption}{\textbf{Assumption}}
\newtheorem{definition}{\textbf{Definition}}
\newtheorem{procedure_con}{\textbf{Procedure}}
\theoremstyle{definition}
\newtheorem{remark}{\textbf{Remark}}
\DeclareMathOperator*{\argmax}{argmax}
\DeclareMathOperator*{\argmin}{argmin}
\newcommand\blfootnote[1]{%
  \begingroup
  \renewcommand\thefootnote{}\footnote{#1}%
  \addtocounter{footnote}{-1}%
  \endgroup
}
\title{The Sample Complexity of Online Contract Design}
\author{
Banghua Zhu$^\dagger$ \quad
Stephen Bates$^{\dagger, \ddagger}$ \quad
Zhuoran Yang$^{\diamond}$ \quad
Yixin Wang$^{\mathsection}$ \quad \\
Jiantao Jiao$^{\dagger, \ddagger}$ \quad
Michael I. Jordan$^{\dagger, \ddagger}$ \blfootnote{$^\dagger$ Department of Electrical Engineering and Computer Sciences, UC Berkeley \quad 
$^\ddagger$ Department of Statistics, UC Berkeley \quad 
$\diamond$ Department of Statistics and Data Science, Yale University \quad
${\mathsection}$ Department of Statistics, University of Michigan}}
\begin{document}

\maketitle

\begin{abstract}
We study the hidden-action principal-agent contract design problem in an online setting. 
In each round, 
the principal posts a contract that specifies the payment to the agent based on each outcome. 
The agent then makes a strategic choice of action that maximizes her own utility, but the action is not directly observable by the principal. 
The principal observes the outcome and receives utility from the agent's choice of action. Based on past observations, the principal dynamically adjusts the  contracts with the goal of  maximizing her utility. 
We treat this problem as a continuum-armed bandit problem, where we think of each potential contract as an arm.

We introduce an online learning algorithm and provide an upper bound on its Stackelberg regret. 
In particular, we show that when the contract space is $[0,1]^m$, the Stackelberg regret is upper bounded by $\widetilde O(\sqrt{m} \cdot T^{1-1/(2m+1)})$, and lower bounded by $\Omega(T^{1-1/(m+2)})$, where $\widetilde O$ omits logarithmic factors. This result shows that exponential-in-$m$ samples are both sufficient and necessary to learn a near-optimal contract, resolving an open problem in~\citet{ho2016adaptive} on the hardness of online contract design.
Moreover, when contracts are restricted to some subset $\mathcal{F} \subset [0,1]^m$, we define an intrinsic dimension of $\mathcal{F}$ that depends on the covering number of the spherical code in the space and bound the regret in terms of this intrinsic dimension.
When $\mathcal{F}$ is the family of linear contracts, we show that the Stackelberg regret grows exactly as $\Theta(T^{2/3})$.

Technically, the contract design problem is challenging because the utility function is discontinuous. Indeed, bounding the discretization error in this setting has been an open problem~\citep{ho2016adaptive}. In this paper, we identify a limited set of directions in which the utility function is continuous, allowing us to design a new discretization method and bound its error. This approach enables the first upper bound with no restrictions on the contract and action space.  The techniques we introduce may extend more generally to other Stackelberg games and continuum-armed bandits.
\end{abstract}

\section{Introduction}

Contract theory studies the interactions between a principal and an agent when the two parties transact in the presence of private information~\citep{faure2001transaction, bolton2004contract, salanie2005economics}. 
The principal would like to achieve her desired outcomes by hiring agents to work for her. The agent wishes to make money by working for the principal. They develop agreements in the form of a contract, which specifies how much the principal would pay under the different possible outcomes of the agent's work.  

Recent years have brought new insights regarding the computational and representational aspects of the contract design problem \citep{babaioff2006combinatorial,  dutting2021complexity, guruganesh2021contracts, castiglioni2022designing, dutting2019simple, dutting2022combinatorial, carroll20152014, alon2021contracts, dutting2023ambiguous}. In this paper, we continue this line of work, studying the sample complexity of the \emph{online contract design problem}, where the principal repeatedly interacts with the agents and dynamically adjusts her proposed contract based on historical observations of the outcomes. The setting of online contract design is common in 
crowdsourcing markets, such as Amazon Mechanical Turk
and Microsoft’s Universal Human Relevance System. These crowdsourcing markets are platforms designed to match available
human resources with tasks to complete. Using these platforms, the principal may post tasks that they
would like completed, along with the amount of money they are willing to pay. Workers (agents) then choose
whether or not to accept the available tasks and complete the work. Their interactions may happen in multiple rounds so the principal can adjust the contract based on historical results. The online contract design problem  is also closely related to the insurance problem, sponsored content creation, affiliate marketing, and freelancing.

Formally, the contract design problem can be formulated as a hidden-action principal-agent problem \citep{grossman1992analysis}. Consider the simple case of  finite outcomes and finite actions. The agent can select one of the actions from the action set $\mathcal{A}$. 
Each action $a$ 
is associated with a distribution $p(\cdot | a)$ over
$m$ outcomes $\{o_1,\cdots,o_m\}$, and incurs a cost $c(a)\in \mathbb{R}_{\geq 0}$. The
principal has different values for the outcomes, denoted as $\vec v\in[0, 1]^m$. Based on the valuation, the principal designs a contract $\vec f\in\mathcal{F}\subset [0, 1]^m$ that specifies a payment
$f_i\in[0, 1]$ for each outcome $o_i$. 
The agent chooses an action   that maximizes expected payment minus cost, i.e., $a^\star(\vec f) = \argmax_{a\in\mathcal{A}}\mathbb{E}_{o\sim p(\cdot | a)}[f_o] - c(a)$. The principal seeks to set up a contract that maximizes her expected utility, which is defined as the expected value of the received outcome, subtracting the payment. 
That is, the utility of contract is given by $u(\vec f) = \mathbb{E}_{o\sim p(\cdot | a^\star(\vec f))}[v_o - f_o]$.


We consider a repeated contract design problem with bandit feedback, where in each round $t \in [T] $ the principal proposes a new contract $\vec f^{(t)}$ based on historical observations, and observes an  outcome $o^{(t)}$ sampled from the agent's best response distribution $p(\cdot|a^\star(f^{(t)}))$. The agent in each round may have a different private type in the sense that the distribution $p$ and the cost $c$ may both depend on a random variable $i$ that is unobserved by the principal. 
Such a private type leads to different best-response behaviors by the agents. 
We are interested in designing a good history-dependent policy $\pi$ that maps the history, $\mathcal{H}^{(t-1)}=(\vec f^{(1)}, o^{(1)}, \cdots, \vec f^{(t-1)}, o^{(t-1)})$, to a new (possibly random) contract $\vec f^{(t)}\in\mathcal{F}$. The target is to minimize the Stackelberg regret that compares the utility achieved by the optimal single deterministic contract with the utility from the policy $\pi$, defined as
\begin{align*}
    R_T(\pi,\mathcal{F}) = \sup_{\vec f\in\mathcal{F}}\sum_{t=1}^T \mathbb{E}_{\vec f^{(t)}\sim \pi(\cdot|\mathcal{H}^{(t-1)})}[u(\vec f) - u(\vec f^{(t)})].
\end{align*}

The online contract design problem can be viewed as a special case of repeated Stackelberg game~\citep{von2010market, marecki2012playing}, where the principal moves first and the agent follows. The main difficulty of the contract design problem is \emph{information asymmetry} --- the principal does not know the private types nor the choice of actions of the agents; such information is possessed only by the agents.

One possible perspective on such information asymmetry is to formulate the problem  as a black-box continuum-armed bandit problem, where each contract is an arm, and the utility of the contract for the principal is the reward function. To deal with the continuous arm space, one first discretizes the space $\mathcal{F}$ uniformly or adaptively, 
with the hope that the discretization error is small \citep{kleinberg2008multi, kleinberg2013bandits}. 
Subsequently, the problem is reduced to a finite-armed bandit problem, where standard bandit algorithms such as  Upper Confidence Bound (UCB) or successive elimination are known to provide a near-optimal solution~\citep{lattimore2020bandit}.

Unfortunately, simple uniform  discretization does not work for the contract design problem. Although the utility function $u(\vec f)$ is a piecewise linear function of $\vec f$, it is not continuous because the best-response mapping from the agent is not continuous in $\vec f$. Thus the discretization error can be arbitrarily large  for the naive uniform discretization.  
To address this challenge, we design a novel  discretization scheme for the continuous contract space. Our discretization uses coding theory, specifically we exploit the maximum packing of a spherical code~\citep{Manin_2019, kabatiansky1978bounds}.
We thereby bound the discretization error without any extra assumptions. This leads to a  near-optimal regret bound for both general contracts and linear contracts, resolving several open problems from prior work.

\subsection{Main results}\label{sec:main_results}

 We provide  upper and lower bounds for the regret in the online contract design problem. Our results  provide a nearly tight characterization for both the general case when $\mathcal{F}=[0, 1]^m$ and the linear case when $\mathcal{F}=\{\alpha \cdot \vec v \mid \alpha\in[0, 1]\}$.
For the upper bound,  we define a new complexity measure based on the covering number of a spherical code in $\mathcal{F}$, denoted as $d(\mathcal{F})$~\citep{Manin_2019,kabatiansky1978bounds}. We then show that the regret depends on $d(\mathcal{F})$, as stated in the following (informal) theorem:
\begin{theorem}[Upper bound]
For a given contract space $\mathcal{F}$, one can design algorithm $\pi$ such that
\begin{align*}
    R_T(\pi, \mathcal{F}) = O(\sqrt{m}\cdot T^{1-\frac{1}{2d(\mathcal{F})+3}}\log(T)).
\end{align*}
\end{theorem}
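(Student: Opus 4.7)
The plan is to reduce this continuum-armed contract bandit to a finite-armed stochastic bandit by constructing a carefully designed discretization $\mathcal{F}_\delta \subset \mathcal{F}$ and then running a standard finite-arm algorithm (such as UCB or successive elimination) on $\mathcal{F}_\delta$. The regret bound will follow from balancing the discretization bias against the finite-arm sampling regret and optimizing the discretization scale $\delta$.

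First, I would analyze the structure of the utility $u(\vec f) = \mathbb{E}_{o \sim p(\cdot | a^\star(\vec f))}[v_o - f_o]$. Because the agent's best response $a^\star(\vec f) = \argmax_{a} \langle \vec f, \vec p_a \rangle - c(a)$ is piecewise constant on the polyhedral cells cut out by the hyperplanes $\{\vec f : \langle \vec f, \vec p_a - \vec p_{a'} \rangle = c(a) - c(a')\}$, the utility is affine and $O(1)$-Lipschitz inside each cell and discontinuous only across these boundaries. Importantly, every such boundary normal $\vec p_a - \vec p_{a'}$ lies in $\mathrm{span}(\mathcal{F})$, which is what lets a spherical-code discretization be effective at all.

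Next I would construct $\mathcal{F}_\delta$ by taking a $\delta$-net $\{\vec u_1, \dots, \vec u_N\}$ of the unit sphere in $\mathrm{span}(\mathcal{F})$, with $N$ controlled by the covering number that defines $d(\mathcal{F})$, together with a scalar grid along each $\vec u_i$ clipped to $\mathcal{F}$. The main obstacle -- the step highlighted in the abstract as resolving the open problem of \citet{ho2016adaptive} -- is to bound the discretization bias: for every $\vec f \in \mathcal{F}$, exhibit $\vec f' \in \mathcal{F}_\delta$ with $u(\vec f) - u(\vec f') = \widetilde O(\delta)$ despite the discontinuities of $u$. The key mechanism is that for any finite list of active boundary normals $\vec p_a - \vec p_{a'}$, the covering property of the spherical code guarantees a direction $\vec u_i$ that is almost orthogonal to all of them simultaneously; sliding $\vec f$ to the nearest grid point along $\vec u_i$ therefore shifts each inner product $\langle \vec f, \vec p_a - \vec p_{a'} \rangle$ by only a small (essentially second-order in $\delta$) amount, preserving the best-response cell whenever its width exceeds this margin. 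The residual change inside a cell is $O(\delta)$ by Lipschitzness, and cells narrower than the margin contribute only $O(\delta)$ jumps that are absorbed into the error. This is precisely the place where naive Euclidean nets fail and where coding-theoretic packing is needed.

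Finally, applying UCB on $\mathcal{F}_\delta$ yields the standard finite-armed pseudo-regret $\widetilde O(\sqrt{|\mathcal{F}_\delta| \cdot T})$; combining with the $\widetilde O(\delta T)$ discretization bias and using that $|\mathcal{F}_\delta|$ is polynomial in $1/\delta$ with effective exponent $2d(\mathcal{F}) + 1$ (after accounting for the spherical-code count and the scalar-grid refinement needed to make Step 3 go through), an optimal choice $\delta \asymp T^{-1/(2d(\mathcal{F})+3)}$ produces the claimed exponent $1 - 1/(2d(\mathcal{F}) + 3)$. The $\sqrt{m}$ factor enters through the $\sqrt{m}$-scale of $[0,1]^m$ (which sets the effective radius of the sphere being covered), and the $\log T$ factor is the usual UCB confidence penalty. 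I expect Step 3 to be by far the main obstacle; the other steps follow from standard continuum-armed and finite-armed bandit technology once the correct discretization is in place.
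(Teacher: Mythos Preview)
Your high-level decomposition (discretize, run UCB, balance) matches the paper, but the heart of the argument --- Step 3, bounding the discretization bias --- rests on a mechanism that does not work and is not what the paper does.

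You claim that the spherical code ``guarantees a direction $\vec u_i$ that is almost orthogonal to all [active boundary normals $\vec p_a - \vec p_{a'}$] simultaneously,'' so that sliding along $\vec u_i$ preserves the best-response cell. This is false on two counts. First, a spherical covering says that every direction is \emph{close} to some code point; it says nothing about the existence of a code point simultaneously orthogonal to an arbitrary collection of vectors, and indeed when the action set is large the normals $\{\vec p_a-\vec p_{a'}\}$ can span all of $\mathbb R^m$, so no such direction exists at all. Second, the $\vec p_a$'s and costs are the agent's private information; the discretization must be fixed before seeing any data and therefore cannot be adapted to these normals. Your argument that ``cells narrower than the margin contribute only $O(\delta)$ jumps'' is also unjustified: the utility jump across a cell boundary is $\Theta(1)$ regardless of how thin the cell is.

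The paper's actual mechanism is different and does not try to avoid crossing cell boundaries. From the two best-response inequalities at $\vec f$ and $\vec f+\vec\gamma$ one gets the monotonicity relation
\[
\sum_o\bigl(p_i(o\mid a_i^\star(\vec f+\vec\gamma))-p_i(o\mid a_i^\star(\vec f))\bigr)\gamma_o\ \ge\ 0,
\]
and specializing to $\vec\gamma=\alpha(\vec v-\vec f)+\vec r$ with $\alpha\in(0,1]$ yields the one-sided continuity
\[
u(\vec f)-u(\vec f+\vec\gamma)\ \le\ 2\Bigl(\|\vec\gamma\|_2+\frac{\|\vec r\|_2}{\alpha}\Bigr).
\]
In words: the utility can only drop a little when you move \emph{toward the value vector $\vec v$}, even if the best response changes. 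Consequently the discretization is anchored at $\vec v$, not at the origin: one covers the set of directions $\{(\vec f-\vec v)/\|\vec f-\vec v\|_2:\vec f\in\mathcal F\}$ by a spherical code $\mathcal V_{\epsilon^2}$ and then lays a radial grid $\{\vec v+\sqrt m\,\beta\,\vec\eta:\beta\in\epsilon\mathbb Z,\ \vec\eta\in\mathcal V_{\epsilon^2}\}\cap\mathcal F$. For any $\vec f^\star$ one finds a nearby grid point $\vec f^\star+\vec\gamma$ with $\vec\gamma$ pointing approximately toward $\vec v$, and the lemma bounds the bias by $O(\sqrt m\,\epsilon)$. The $\sqrt m$ arises from passing $\|\cdot\|_\infty\le\|\cdot\|_2$ in the continuity lemma, not from the diameter of the cube as you suggested.
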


When $\mathcal{F}=[0, 1]^m$, one can show that  $d(\mathcal{F}) \leq m-1$. Thus we know that the upper bound for the general case is  $\widetilde O(\sqrt{m}\cdot T^{1-1/(2m+1)})$, where $\widetilde O(\cdot) $ omits logarithimic factors.  We also provide a minimax lower bound for the general case:
\begin{theorem}[Lower bound]\label{thm:informal_lower}
For any policy $\pi$, one can find some instance of the contract design problem such that
\begin{align*}
    R_T(\pi, [0, 1]^m) = \Omega( T^{1-\frac{1}{m+2}}).
\end{align*}
\end{theorem}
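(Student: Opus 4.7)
The plan is to adapt the standard Lipschitz continuum-armed bandit lower bound (as in, e.g., \citet{kleinberg2008multi}) to the contract design setting; the target exponent $(m+1)/(m+2)$ is exactly the needle-in-a-haystack rate in $m$ dimensions. Fix $\epsilon, \Delta > 0$ to be chosen later and let $\Theta \subset [0,1]^m$ be a maximal $2\epsilon$-packing in $\ell_\infty$, so that $|\Theta| = N \asymp \epsilon^{-m}$ and the boxes $B_\theta := \{\vec f : \|\vec f - \theta\|_\infty \leq \epsilon/2\}$ are pairwise disjoint. For each $\theta \in \Theta$ I would construct a contract design instance $I_\theta$, and a null instance $I_0$, satisfying two properties: (i) under $I_\theta$ the principal's utility equals a constant $u_0$ on $\vec f \notin B_\theta$ and $u_0 + \Delta$ on $\vec f \in B_\theta$, while under $I_0$ it equals $u_0$ everywhere; and (ii) the distribution of the observed outcome under $I_\theta$ coincides with that under $I_0$ whenever $\vec f \notin B_\theta$, and differs from it by only $O(\Delta^2)$ in KL on $B_\theta$.

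Taking this family as given, the remainder is a textbook KL/pigeonhole argument. For any policy $\pi$, let $N_\theta := \Exp_{I_0}^\pi[|\{t \leq T : \vec f^{(t)} \in B_\theta\}|]$. The chain rule for KL divergence together with (ii) yields
\begin{align*}
\Kull{\Prob_{I_0}^\pi}{\Prob_{I_\theta}^\pi} \leq c_1 \Delta^2 \, N_\theta,
\end{align*}
for an absolute constant $c_1 > 0$. Since the boxes are disjoint, $\sum_\theta N_\theta \leq T$, so at least $N/2$ indices $\theta$ satisfy $N_\theta \leq 2T/N$. Choosing $\Delta$ so that $2 c_1 T \Delta^2 / N \leq \log 2$, I would apply the Bretagnolle--Huber inequality to the event $\{|\{t \leq T : \vec f^{(t)} \in B_\theta\}| \geq T/2\}$ (whose probability is at most $4/N$ under $I_0$ by Markov), and conclude that under $I_\theta$ the algorithm plays outside $B_\theta$ for at least $T/2$ rounds with constant probability, each such round contributing regret $\Delta$ since the unique optimum lies in $B_\theta$. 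Averaging over the good indices gives $\sup_\theta \Exp_{I_\theta}[R_T(\pi)] = \Omega(\Delta T)$. Setting $\epsilon = T^{-1/(m+2)}$ and $\Delta = c_2 \epsilon$ for a small constant $c_2$ exactly saturates the KL budget and yields the claimed $\Omega(T^{1 - 1/(m+2)})$ lower bound.

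The main obstacle is realizing the bump instances in (i)--(ii) as valid contract design problems. Because $u(\vec f)$ is piecewise linear and determined by the agent's best response, I cannot prescribe an arbitrary bump shape; moreover, the utility gap and the outcome-distribution gap are coupled, which is why (ii) must give $O(\Delta^2)$ in KL rather than $O(1)$. My plan is to use a baseline action $a_0$ with cost zero and uniform outcome distribution $p_0 = (1/m, \ldots, 1/m)$, together with $O(m)$ auxiliary $\theta$-specific actions whose indifference halfspaces (against $a_0$ and against each other) cut out a box of side $\epsilon$ around $\theta$. Inside this box a distinguished action $a_\theta$ is best response with outcome distribution $p_\theta = p_0 + \Theta(\Delta) \, \vec e_\theta$ for a direction $\vec e_\theta$ chosen so that the principal's value-minus-payment exceeds the baseline by exactly $\Delta$. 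Verifying that this construction is jointly feasible for every $\theta \in [0,1]^m$ while keeping $\|p_\theta - p_0\|_1 = O(\Delta)$ (so the KL bound in (ii) holds) is the technical heart of the lower bound, but only uses a bounded number of actions and outcomes, so it fits within the contract design model.
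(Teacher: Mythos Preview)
Your information-theoretic skeleton (null instance, KL chain rule, pigeonhole over $N\asymp\epsilon^{-m}$ disjoint boxes, Bretagnolle--Huber) is correct and is essentially what the paper does. The gap is entirely in the instance construction.

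With a single baseline action $a_0$, property~(i) is unattainable: on any region where $a_0$ is the agent's best response, the principal's utility $u(\vec f)=\langle p_0,\vec v-\vec f\rangle$ is an \emph{affine} function of $\vec f$, never a constant. Worse, this affine function has slope $\Theta(1/m)$ in every coordinate, so the null instance $I_0$ has its maximum at a single corner $\vec f_0^\star$ (e.g.\ the origin), and for a box $B_\theta$ centered at a generic $\theta$ the baseline utility on $B_\theta$ already sits $\Theta(\|\theta-\vec f_0^\star\|_1/m)=\Theta(1)$ below $u(\vec f_0^\star)$. Since you require $\|p_\theta-p_0\|_1=O(\Delta)$ for the KL bound, the bump can raise the utility inside $B_\theta$ by at most $O(\Delta)=O(\epsilon)$, which is far too small to overcome that linear drop. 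Consequently, for all but the $O(1)$ boxes adjacent to $\vec f_0^\star$, the optimal contract under $I_\theta$ is still $\vec f_0^\star\notin B_\theta$, and playing outside $B_\theta$ incurs \emph{zero} regret. The needle-in-a-haystack collapses, and no $O(m)$-action construction of this shape can repair it.

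The paper's construction addresses exactly this obstacle, at the price of using $\Theta((1/\epsilon)^m)$ actions in every instance: one action $a(k_1,\dots,k_m)$ per $\epsilon$-cube, with production probabilities $p(k_1,\dots,k_m)$ and costs $c(k_1,\dots,k_m)$ tuned so that (a) each action is the best response precisely on its own cube and (b) the principal's maximum utility equals $1/2$, attained at \emph{every} cube corner. This exponential family of actions flattens the landscape globally. Each alternative instance then lowers the cost of one designated action by $\Theta(\epsilon^2/m)$, which slightly enlarges its best-response cell and lifts the optimum there to $1/2+\Theta(\epsilon)$, while leaving the outcome distribution unchanged off that cell. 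The exponential action count is not incidental; the paper explicitly remarks that this lower bound does not rule out polynomial sample complexity when $|\mathcal{A}|$ is small, so your hope of realizing the hard instances with a bounded number of actions would, if it worked, actually contradict that caveat.
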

In summary, for the case of $\mathcal{F}=[0, 1]^m$,
we show an upper bound  $\widetilde O(\sqrt{m}\cdot T^{1-1/(2m+1)})$  and a lower bound 
$\Omega(T^{1-1/(m+2)})$.   Although the dependence on $m$ does not match exactly except for $m=1$, it shows that one has to pay an exponential cost in the number of samples in order to learn an approximately optimal contract.

Another interesting family of contracts is the set of linear contracts, $\mathcal{F}_{\mathsf{lin}}=\{\alpha \cdot \vec v \mid \alpha\in[0, 1]\}$, due to its simplicity, robustness and good approximation properties~\citep{dutting2019simple, carroll20152014}. In this case, one can show that $d(\mathcal{F}_{\mathsf{lin}}) = 0$. This naturally gives an upper bound $O(\sqrt{m}\cdot T^{2/3}\log(T))$, which matches the lower bound $\Omega(T^{2/3})$ in~\citet{ho2016adaptive} and \citet{kleinberg2003value} up to a factor of $\sqrt{m}$. By a refined analysis,  we can  get rid of the extra $\sqrt{m}$ factor and extend the upper bound to the case when the outcome size is infinite. By taking $m=1$ and adapting the argument in Theorem~\ref{thm:informal_lower}, one can also derive the same $\Omega(T^{2/3})$ lower bound   for the case of linear contracts. 
We summarize our results for linear contracts here: 
\begin{theorem}[Linear contracts]
One can find an algorithm $\pi$ such that 
\begin{align*}
   R_T(\pi, \mathcal{F}_{\mathsf{lin}}) = \widetilde \Theta( T^{2/3}).
\end{align*}
Here in $\widetilde\Theta(\cdot)$ we omit logarithmic factors. 
\end{theorem}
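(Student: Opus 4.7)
The plan is to establish matching upper and lower bounds of order $T^{2/3}$ (up to logarithmic factors) separately. For the upper bound, I would first observe that every element of $\mathcal{F}_{\mathsf{lin}}$ is parameterized by the single scalar $\alpha \in [0,1]$, so the contract family is intrinsically one-dimensional. Applying the general upper bound theorem with $d(\mathcal{F}_{\mathsf{lin}}) = 0$---which one verifies directly from the spherical-code covering definition, since a one-dimensional curve admits a packing whose size is independent of $m$---already yields $\widetilde{O}(\sqrt{m} \cdot T^{2/3})$. To drop the $\sqrt{m}$ factor and to accommodate infinite outcome spaces, I would replace the generic coding-based discretization with a direct uniform discretization of the $\alpha$-axis at resolution $\epsilon = T^{-1/3}$, producing roughly $T^{1/3}$ arms, and then run a finite-armed bandit algorithm (UCB or successive elimination) over these arms. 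Because per-round rewards $(1-\alpha)v_o$ lie in $[0,1]$ regardless of the outcome dimension, the stochastic regret is $\widetilde{O}(\sqrt{T \cdot T^{1/3}}) = \widetilde{O}(T^{2/3})$, which when combined with an $O(\epsilon T) = O(T^{2/3})$ discretization error yields the claim.

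The core of the upper-bound analysis is controlling the discretization error even though $u(\alpha \vec v) = (1-\alpha)\,\mathbb{E}_{o \sim p(\cdot \mid a^{\star}(\alpha))}[v_o]$ is only piecewise linear in $\alpha$, with jump discontinuities at the finitely many thresholds where the agent's best response changes. I would import the perturbation argument used to bound the discretization error in the general upper bound: for any target $\alpha^{\star} \in [0,1]$, one locates a grid point $\alpha$ with $|\alpha - \alpha^{\star}| \leq \epsilon$ for which $u(\alpha \vec v) \geq u(\alpha^{\star} \vec v) - O(\epsilon)$. The essential mechanism is that by perturbing $\alpha$ in a carefully chosen direction (e.g.\ upward), the agent's incentive-compatibility inequality forces the change in the agent's own utility---which is continuous in $\alpha$ for each fixed action---to dominate the loss in the principal's utility, giving the desired $O(\epsilon)$ slack.

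For the lower bound, I would specialize the construction underlying Theorem~\ref{thm:informal_lower} to $m = 1$, which matches the scalar parametrization of $\mathcal{F}_{\mathsf{lin}}$. Concretely, I would build two problem instances with two actions each, differing only by a shift of size $\Delta = T^{-1/3}$ in the threshold where the agent's best response switches. A two-point (Le Cam) argument then shows that distinguishing them requires $\Omega(\Delta^{-2}) = \Omega(T^{2/3})$ samples within the relevant window, while incorrect identification forces $\Omega(\Delta)$ per-round regret, yielding the $\Omega(T^{2/3})$ lower bound. The main obstacle is the refined upper-bound analysis: the generic proof picks up $\sqrt{m}$ through covering arguments in $\mathbb{R}^m$, and removing this factor while simultaneously making the discretization robust to infinite outcome spaces requires exploiting that under $\vec f = \alpha \vec v$ the agent's behavior is governed solely by the scalar statistics $(\mathbb{E}_{o \sim p(\cdot \mid a)}[v_o], c(a))_{a \in \mathcal{A}}$, so the problem collapses to a one-dimensional bandit irrespective of the cardinality of the outcome space.
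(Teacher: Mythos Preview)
Your upper-bound plan matches the paper's: uniformly discretize $\alpha\in[0,1]$ at scale $\epsilon\asymp T^{-1/3}$, run UCB over the resulting $\Theta(1/\epsilon)$ arms, and bound the discretization error by a one-sided Lipschitz property obtained from the agent's best-response inequalities. The paper makes this last step crisp by first showing (Lemma~\ref{lem:linear_property}) that $u(\alpha)/(1-\alpha)=\mathbb{E}_{i\sim\rho}\mathbb{E}_{X\sim p_i(\cdot\mid a_i^\star(\alpha))}[v(X)]$ is non-decreasing in $\alpha$, from which $u(\alpha)-u(\alpha+\epsilon)\le\epsilon$ (Lemma~\ref{lem:linear_lipschitz}) follows in one line; this formulation also makes transparent why no $m$-dependence appears and why the argument extends to infinite outcome spaces.

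The lower-bound proposal has a real gap. A two-point Le Cam argument with two instances whose best-response threshold is shifted by $\Delta=T^{-1/3}$ cannot deliver $\Omega(T^{2/3})$: with only two hypotheses the algorithm knows the single informative window in advance and can probe it for $\Theta(\Delta^{-2})$ rounds at per-round cost $\Theta(\Delta)$, identifying the instance after total regret only $\Theta(\Delta^{-1})=\Theta(T^{1/3})$. Obtaining $\Omega(T^{2/3})$ genuinely requires $K=\Theta(1/\epsilon)=\Theta(T^{1/3})$ competing instances whose informative cells $\mathcal{C}_l$ are disjoint; then the total time across all cells is at most $T$, a Jensen/pigeonhole step shows most cells receive only $O(T\epsilon)$ pulls, and with $\epsilon=T^{-1/3}$ the per-cell KL stays $O(1)$ while the suboptimality gap is $\Theta(\epsilon)$. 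This is exactly the construction in Theorem~\ref{thm:linear_lower} (mirroring Theorem~\ref{thm:general_lower} with a single non-null outcome): $\Theta(1/\epsilon)$ instances indexed by the special cell, compared against a common reference instance via the divergence-decomposition lemma and Jensen's inequality. Your pointer to ``specialize Theorem~\ref{thm:informal_lower} to $m=1$'' is the right one---that construction already uses many instances---but the two-instance description you then give does not realize it and would fall short of the claimed rate.
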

Here in $\widetilde \Theta( T^{2/3})$ there is no dependence on the dimension $m$.
The above  results show that the characterization of $d(\mathcal{F})$ is nearly tight in both the general case of $\mathcal{F}=[0, 1]^m$ and the linear case of $\mathcal{F}_{\mathsf{lin}}$. 

Prior work~\citep{ho2016adaptive} designs an adaptive zooming algorithm for the online contract design problem and also provides an upper bound on the sample complexity---the first such result in the literature. \citet{cohen2022learning} extends the result to  the case when the utility of the agent has bounded risk-aversion and bounds the discretization error, but under a stronger assumption of monotone-smooth. Their upper bound on sample complexity is exponential in $m$ and applies under the additional assumptions of  monotone contract and first-order stochastic dominance between the probability distributions. In comparison, we prove an exponential upper bound without these restrictions, even when the contract space is continuous. This resolves one of the open problems in~\citet{ho2016adaptive} on the structure of the contract design problem and addresses the problem of bounding the discretization error without extra assumptions. In addition, our lower bound on sample complexity is the first bound that addresses the hardness of contract design. 

We note that none of our upper bounds depend on the cardinality of the action space  $|\mathcal{A}|$. This means that our results apply to the case when the action space is infinite or continuous. However, in our construction of a lower bound in Theorem~\ref{thm:informal_lower}, we design hard instances with an action size that is exponential in $m$. Thus our lower bound does not rule out the possibility of achieving polynomial sample complexity when $|\mathcal{A}|$ is small. 

\subsection{Technical contributions}

\textbf{Upper bound.} 
The main difficulty of online contract design lies in the following two aspects:
\begin{itemize}
    \item The contract space that we search over is a continuous high-dimensional cube, $[0, 1]^m$;
    \item The utility function $u(\vec f)$ is not Lipschitz with respect to 
    $\vec f$. 
\end{itemize}
The bandit problem in metric spaces has been thoroughly studied by~\citet{kleinberg2008multi} and \citet{kleinberg2013bandits}, who provide a near-optimal discretization method when the search space is continuous and high-dimensional and the utility (reward) function is Lipschitz. However, it is unclear how one can design a good discretization method when the  utility function is not Lipschitz.

One of our main technical contributions is the design of a  new discretization method for the contract design problem. Interestingly, although the utility function $u(\vec f)$ is indeed \emph{not} Lipschitz with respect to $\vec f$, we are able to show that it is continuous along some certain directions. In particular, we provide the following structural lemma on the continuity and landscape of the utility function:
\begin{lemma}[Continuity of the utility function]\label{lem:intro_continuity}
Let $\vec \gamma = \alpha(\vec v-\vec f) + \vec r$, for some $\alpha\in(0, 1]$. For any $\vec f\in[0, 1]^m$, we have 
\begin{align*}
    u(\vec f) - u(\vec f+\vec \gamma) \leq 2 \left(\|\vec\gamma\|_\infty + \frac{\|\vec r\|_\infty}{\alpha}\right)\leq 2 \left(\|\vec\gamma\|_2 + \frac{\|\vec r\|_2}{\alpha}\right).
\end{align*}
\end{lemma}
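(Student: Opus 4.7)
The plan is to exploit the incentive-compatibility (IC) structure of the agent's best response. Let $a = a^\star(\vec f)$ and $a' = a^\star(\vec f + \vec\gamma)$, with some consistent tie-breaking. Writing the principal's utility in terms of the induced outcome distributions yields the decomposition
\begin{align*}
u(\vec f) - u(\vec f + \vec\gamma)
= \big(\mathbb{E}_{p(\cdot|a)}[v_o - f_o] - \mathbb{E}_{p(\cdot|a')}[v_o - f_o]\big) + \mathbb{E}_{p(\cdot|a')}[\gamma_o].
\end{align*}
The second summand is trivially bounded by $\|\vec\gamma\|_\infty$, so everything reduces to controlling the first, which measures how much the principal's expected value drops because the agent switched best responses from $a$ to $a'$.

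The next step applies the two IC inequalities: since $a$ is optimal under $\vec f$ and $a'$ is optimal under $\vec f + \vec\gamma$,
\begin{align*}
\mathbb{E}_{p(\cdot|a)}[f_o] - c(a) &\geq \mathbb{E}_{p(\cdot|a')}[f_o] - c(a'), \\
\mathbb{E}_{p(\cdot|a')}[f_o + \gamma_o] - c(a') &\geq \mathbb{E}_{p(\cdot|a)}[f_o + \gamma_o] - c(a).
\end{align*}
Adding these cancels the cost terms and yields the revealed-preference monotonicity $\mathbb{E}_{p(\cdot|a')}[\gamma_o] \geq \mathbb{E}_{p(\cdot|a)}[\gamma_o]$. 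Intuitively, the action chosen under the larger contract puts at least as much mass in the direction $\vec\gamma$ as the action chosen under $\vec f$; this is the only place the hidden-action structure enters the estimate.

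Next I plug $\vec\gamma = \alpha(\vec v - \vec f) + \vec r$ into this monotonicity, expand, and rearrange to isolate the value gap:
\begin{align*}
\mathbb{E}_{p(\cdot|a)}[v_o - f_o] - \mathbb{E}_{p(\cdot|a')}[v_o - f_o] \leq \frac{1}{\alpha}\big(\mathbb{E}_{p(\cdot|a')}[r_o] - \mathbb{E}_{p(\cdot|a)}[r_o]\big) \leq \frac{2\|\vec r\|_\infty}{\alpha}.
\end{align*}
Combining with the trivial bound on $\mathbb{E}_{p(\cdot|a')}[\gamma_o]$ gives $u(\vec f) - u(\vec f + \vec\gamma) \leq \|\vec\gamma\|_\infty + 2\|\vec r\|_\infty/\alpha$, which is (slightly stronger than) the first inequality claimed in the lemma; the second inequality then follows from $\|\cdot\|_\infty \leq \|\cdot\|_2$.

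The main conceptual hurdle is identifying the correct direction: for a generic perturbation, the agent can switch to an action that is much worse for the principal, so $u$ is genuinely discontinuous. The direction $\vec v - \vec f$ is special because moving along it aligns the agent's incentives with the principal's, so any best-response switch can only reveal an action that is \emph{approximately} no worse for the principal. The one technical step that turns this intuition into a bound is the cancellation of the cost terms $c(a),c(a')$ when the two IC inequalities are added, which lets the unobservable cost structure disappear entirely and leaves an estimate that depends only on $\vec\gamma$, $\vec r$, and $\alpha$.
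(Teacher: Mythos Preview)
Your proof is correct and follows essentially the same route as the paper: derive the revealed-preference monotonicity $\mathbb{E}_{p(\cdot|a')}[\gamma_o]\ge\mathbb{E}_{p(\cdot|a)}[\gamma_o]$ by summing the two IC inequalities (the paper isolates this as a separate lemma), substitute $\vec\gamma=\alpha(\vec v-\vec f)+\vec r$ to bound the value gap, and combine with the trivial bound on $\mathbb{E}_{p(\cdot|a')}[\gamma_o]$. The only cosmetic differences are that the paper carries the expectation over agent types $i\sim\rho$ throughout (your argument applies type-by-type and then one averages), and the paper bounds the $\gamma$-term by $2\|\vec\gamma\|_\infty$ rather than $\|\vec\gamma\|_\infty$, so your constant is indeed slightly sharper.
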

With the help of this lemma, we can show that for a fixed $\vec f$, any other contract that lies inside a small cone with $\vec f$ as its apex will have utility close to $\vec f$. Thus it suffices to design a covering of the whole space $\mathcal{F}$ with such a cone. Thus the problem can be reduced to finding the minimum covering with a spherical code, a problem which has been studied extensively in information theory~\citep[see, e.g.,][]{Manin_2019, kabatiansky1978bounds}. We note in passing that the idea of identifying continuity directions might provide insight into general Stackelberg games and bandits with discontinuities. 

\vskip 0.1in
\noindent
\textbf{Lower bound.} Our second technical contribution is the construction of a lower bound. We employ the standard reduction in theoretical statistics from estimation to testing to establish this lower bound. We design  $\Theta ( \exp(m) ) $ instances such that each of the instances has a different optimal contract. To reduce the problem from a continuum-armed bandit problem to a finite-armed bandit problem, we divide the space into $(1/\epsilon)^m$ disjoint cubes and design actions such that each action is taken when the contract is in one of the cubes. Compared to a finite-armed bandit where the reward is decoupled for each arm, the utility in contract design is a joint function of all actions through the max operator. Thus the  reduction requires a delicate design of the production function and cost function. We again note that this construction may shed light on lower bounds for general Stackelberg games.

\subsection{Related work}

\textbf{Recent progress in contract theory.}
An extensive literature devoted to contract theory has arisen over several decades, with connections to information design~\citep[e.g.,][]{kamenica2011bayesian} and mechanism design~\citep[e.g.,][]{hurwicz2006designing}. A recent focus has been on the computational aspects of contract design. In particular,  \citet{babaioff2006combinatorial} study computational aspects of the
problem of incentivizing a collection of agents to perform a task requiring coordination among the agents. \citet{dutting2021complexity, guruganesh2021contracts, castiglioni2022designing} show that finding optimal
contracts is computationally hard when the agent can be one of several types or the outcome space is large. A complementary line of work has studied how well simple contracts can approximate the optimal utility; in particular \citet{dutting2019simple, castiglioni2021bayesian} show that the best linear contract can approximate the performance of the optimal contract well under certain conditions, and \citet{carroll20152014} shows that linear contracts are robust to unknown actions.

In addition to the work that focuses on computational issues, there has been interest in exploring linkages between statistical inference ideas and contract theory.
In particular, \citet{schorfheide2012use, schorfheide2016hold} study the problem of delegating statistical estimation to a strategic agent, and \citet{frazier2014incentivizing}
consider the problem of incentivizing learning agents in a bandit setting. Recent work by \citet{bates2022principal} shows how hypothesis testing can be carried out via the interactions between a principal and a set of agents. There is also work that applies learning methods to contract theory.  Notably, \citet{ho2016adaptive} formulate a general problem of the estimation of the distribution of agent types when a principal transacts with many agents whose values are drawn from the distribution.  \citet{cohen2022learning} extends the analysis to  the case when the utility of agent has bounded risk-aversion.  However, as is discussed in Section~\ref{sec:main_results}, they rely on several strong assumptions, in particular monotone contracts and first-order stochastic dominance among the probabilities. 

\vskip 0.1in
\noindent 
\textbf{Stackelberg games.}
Our setting is a special case of repeated Stackelberg games~\citep{von2010market, marecki2012playing, lauffer2022no}, where the principal leads and the myopic agent follows with its best response for the current round. \citet{bai2021sample} considers the repeated Stackelberg game where both the principal and the agent learn their optimal actions (a Stackelberg equilibrium) from samples. However, they assume a central controller that can determine the action of both the principal and the agent. Moreover, they rely on an assumption of a bounded gap between optimal response and $\epsilon$-approximate best response. 
In contrast, in our framework, we assume that the agent's utility is unknown, and that the agent always takes the best response. Other examples of Stackelberg games include Stackelberg security games~\citep{conitzer2006computing, tambe2011security}, strategic learning, and dynamic task pricing. We discuss the latter two in more detail.


\vskip 0.1in
\noindent
\textbf{Strategic learning and performative prediction.}
Strategic classification and regression are also special cases of general Stackelberg games~\citep{hardt2016strategic, dong2018strategic,chen2019grinding, zrnic2021leads, liu2016bandit}. It is usually assumed that the agent utility includes some cost related to the distance between the true feature of the agent and the manipulated feature. This introduces Lipschitzness into the utility function for the principal. There are also different variants of the formulation. For example, \citet{dong2018strategic, chen2019grinding} consider the setting where the data to be classified are generated by an adversary, and the observation in each round is the perturbed feature, which leads to a problem of bandit convex optimization.  \citet{hardt2016strategic, zrnic2021leads, liu2016bandit} consider stochastic settings where the data are generated from a distribution.  In contrast, in the setting of contract theory, the cost can be arbitrary and the utility function is not continuous and dependent on the probability distribution of the best response. \citet{harris2022strategic, yu2022strategic} consider a strategic instrumental variable regression and strategic MDP, under the assumption that the behavior of the agent is known.

Our setting can be viewed as a special case of performative prediction~\citep{perdomo2020performative}, where  the
prediction (the queried contract) causes a change in the distribution of the target utility. \citet{jagadeesan2022regret} study the problem of regret minimization in performative prediction. However, they assume that the change in distribution is Lipschitz with respect to the queried prediction. This does not hold in the contract design problem since the probability distribution can change drastically when we move the contract slightly. 

\vskip 0.1in
\noindent
\textbf{Dynamic task pricing and mechanism design.}
The problem of dynamic task pricing can be viewed as a special case of the online contract design in  one-dimensional space (also a special case of linear contract design). In each round, the principal offers a price $f_t$ to purchase one item. The agent agrees to sell if and only if $f_t \geq c_t$
, where $c_t\in[0, 1]$ is the agent’s private cost for
this item and is sampled from some unknown fixed distribution~\citep{kleinberg2003value, ho2016adaptive}.
Our results on linear contracts also recover the optimal rate $\widetilde \Theta(T^{2/3})$ for the problem of dynamic task pricing in~\citep{kleinberg2003value, ho2016adaptive}. See Remark~\ref{rmk:linear} for more discussions. The general theory of mechanism design and  auction also considers revenue maximization under private information with agent (buyer)~\citep{myerson1981optimal, hart2013menu, cai2017learning, daskalakis2015multi}. In the case of multi-item mechanism design, the principal will offer a menu that lists the prices for bundles of items from which the agent can choose. Compared to the contract theory, it is assumed that  both the action  and the outcome depend on the contract the principal proposes. 


\section{Problem Formulation}\label{sec:formulation}
\subsection{Single-round interaction}
Consider the problem of a single-round hidden-action contract design. The principal proposes a task that may be completed by the agent, and a contract specifying how the agent will be paid if she achieves some  certain outcome. The agent observes the contract and chooses some action, which in turn determines her own deterministic cost and the distribution of the final outcome. The agent chooses the action strategically so as to maximize her expected utility. Based on the realized outcome, the principal receives the corresponding reward and pays the agent the amount specified in the contract. We formally define the terminologies and problems as follows.

\begin{itemize}

    \item {\it Outcome:}
Let $\mathcal{O}$ be an ordered set of all possible outcomes ($\mathcal{O}$ may contain a finite or  infinite number of outcomes). Without loss of generality, we assume that $0\in \mathcal{O}\subset [0, 1]$, where $0$ is recognized as the null outcome.   
\item {\it Value:} Let $v(\cdot):\mathcal{O}\mapsto [0, 1]$ be the value function for the principal.  We assume  that $v(0)=0$ for the null outcome and $v$ is a non-decreasing function, i.e. $v(i)\leq v(j)$ for any $i,j\in\mathcal{O}$ with $i\leq j$.
\item {\it Contract}: The  principal would like to propose some contract $f:\mathcal{O}\mapsto [0, 1]$ that specifies the payment for each outcome $o\in\mathcal{O}$. We assume that the principal searches the contract among a given set $\mathcal{F}$. When $\mathcal{O}$ is a finite set with $|\mathcal{O}|=m$, the contract function can be viewed as an $m$-dimensional vector $\vec f \in \mathcal{F}\subset [0, 1]^m$.
\item {\it Agent type:} Let $\mathcal{T}$ be the set of all possible types of agents. The type of the agent is drawn randomly from an unknown distribution $\rho\in\Delta^{\mathcal{T}}$. Each type of agent may have a different production function and cost associated with the actions. We define them precisely below.
\item {\it Action:}  Let $\mathcal{A}$ be the set of all possible actions that the agents might take ($\mathcal{A}$ may contain a finite or infinite number of actions). 

\item {\it Production and Cost: } Let $p_i(| \cdot):\mathcal{A}\mapsto \Delta^{\mathcal{O}}$ and $c_i(\cdot):\mathcal{A}\mapsto [0, 1]$ be the production function and cost function for  an agent with type $i\in\mathcal{T}$. That is, for an agent with type $i$, the cost of choosing action $a\in\mathcal{A}$ is $c_i(a)$, and the outcomes of choosing action $a\in\mathcal{A}$ follow the distribution $p_i(\cdot | a)$. 
Let $0\in\mathcal{A}$ be the null action which has $0$ cost and leads to outcome $0$ deterministically. That is, one has $p_i(o=0|a=0)=1, c_i(0)=0$ for any $i\in\mathcal{T}$. 
\end{itemize}

Based on the production function $p_i$, cost $c_i$ and the given contract $f$, a strategic agent with type $i\in\mathcal{T}$ would choose action that maximizes her own expected utility:
\begin{align}
    a_i^\star(f) = \argmax_{a\in\mathcal{A}} \mathbb{E}_{X\sim p_i(\cdot|a)}[f(X)] - c_i(a).\label{eq:agent_ic}
\end{align}
The principal would like to design the contract such that her own expected utility is maximized. The utility of the principal is defined as
\begin{align}
     u(f) & = \mathbb{E}_{i\sim \rho}\mathbb{E}_{X\sim p_i(\cdot | a_i^\star(f))}[v(X) - f(X)]. \label{eq:principal_ic} 
\end{align}
When the principal is aware of the type distribution, the production function, and the cost function, the optimal contract can be found by solving the maximization problem $f^\star = \argmax_{f\in\mathcal{F}} u(f)$, which is a bi-level optimization problem due to the best response from agents.


When the outcome and action set are finite with $|\mathcal{O}|=m$, the contract  and value can be represented as an $m$-dimensional vector $\vec f\in[0,1]^m$ and $\vec v\in[0,1]^m$. For the problem with single type agent (when $\rho$ is supported on a singleton), the optimal contract can be computed in polynomial time by linear programming. When there are multiple types, the computation of the optimal single contract or optimal menu is shown to be APX-Hard in~\citet{guruganesh2021contracts}. In this paper, we focus on the statistical complexity of the dynamic model, which is introduced in the next section.

\subsection{Multi-round interaction}

We introduce  an online model where the principal and the agents interact in $T$ rounds. The principal always has the same known value $v$ over the outcomes in all the $T$ rounds.
In each round $t\leq T$, a new agent arrives with type $i^{(t)}$ following distribution $\rho$. Neither the distribution $\rho$ nor the type $i^{(t)}$ are known to the principal. The principal proposes contract $f^{(t)}$ to the new agent based on her historical observations. The agent, based on the contract, chooses her action strategically to maximize her expected utility. The outcome follows a distribution $p_{i^{(t)}}(\cdot|a)$ based on the chosen action $a$. The principal observes the realized outcome and receives a reward for the outcome. The principal knows her own reward function $v$ but does not know the type of the agent, the production function $p_i$, and the cost of agent $c_i$. The detailed interaction protocol is summarized in Procedure~\ref{procedure.contract}.
\begin{procedure_con}\label{procedure.contract}
Procedure at round $t$:
\begin{enumerate}
    \item A new agent arrives with $i^{(t)}\sim \rho$. 
    \item The principal selects a (possibly random) contract $f^{(t)}$ from the set $\mathcal{F}$ based on the value $v$ and the prior information $\mathcal{H}^{(t-1)}=(f^{(1)}, o^{(1)}, \cdots, f^{(t-1)}, o^{(t-1)})$.
    \item The agent with type $i^{(t)}$ chooses the best action $a_{i^{(t)}}^\star(f^{(t)})$ according to Equation (\ref{eq:agent_ic}).
\item The principal observes the outcome $o^{(t)}\sim p_{i_t}(\cdot | a_{i^{(t)}}^\star(f^{(t)}))$ and receives utility $v(o^{(t)}) - f^{(t)}(o^{(t)})$.
\end{enumerate}
\end{procedure_con}

A key aspect of this interaction protocol is information asymmetry. The agents make decisions based on their private type and action sets. However, both the type and actions are unknown to the principal. The principal is only able to observe the past outcomes from the agents' best responses.
In the overall procedure, the principal would like to find some good policy $\pi$ that maps the history information $\mathcal{H}^{(t-1)}$ to a new (possibly random) contract $\vec f^{(t)}$. This forms  a repeated Stackelberg game since the agent always takes best response based on the principal's action at the current round. The target is to minimize the Stackelberg regret that compares the utility achieved by the optimal single deterministic contract with the utility from following policy $\pi$, defined as
\begin{align*}
    R_T(\pi,\mathcal{F}) = \sup_{f\in\mathcal{F}}\sum_{t=1}^T \mathbb{E}_{ f^{(t)}\sim \pi(\cdot|\mathcal{H}^{(t-1)})}[u( f) - u(f^{(t)})].
\end{align*}
Note that a sublinear regret implies that a mixed strategy that is uniform over $\{f^{(t)}\}_{t=1}^T$ will approximate the optimal single contract (or Stackelberg equilibrium). Thus throughout the paper, we focus on the rate of the Stackelberg regret $R_T(\pi,\mathcal{F})$.

\section{Learning Contracts with Finite Outcomes}\label{sec:general}

We first study the sample complexity of learning an optimal contract when  the number of outcomes  is  finite. Let $|\mathcal{O}| = m$. Without loss of generality, we index the outcomes as  $\mathcal{O}=\{0, 1,\cdots, m-1\}$. In this case, the contract $f$ and the reward $v$ can be represented by  $m$-dimensional vectors $\vec f = (f_0, f_1,\cdots, f_{m-1})$, $\vec v = (v_0, v_1,\cdots, v_{m-1})$ where $f_o\in[0, 1], v_o\in[0, 1], \forall o\in\mathcal{O}$. We can rewrite the expected utility for the principal in Equation (\ref{eq:principal_ic}) as 
\begin{align}\label{eq:finite_expected_utility}
    u({\vec f}) & = \mathbb{E}_{i\sim\rho}\left[\sum_{o=0}^{m-1} p_i(o|a^\star_i(\vec f))(v_o-f_o)\right],
    \\ 
&\text{ where }   a_i^\star(\vec f) = \argmax_{a\in\mathcal{A}} \sum_{o=0}^{m-1} p_i(o|a)\cdot f_o - c_i(a), \forall i\in\mathcal{T}. \nonumber 
\end{align}
We denote the search space for contracts as $f\in \mathcal{F}\subset [0, 1]^m$. The  optimal contract is defined as $\vec f^\star = \argmax_{\vec f\in\mathcal{F}}  u(\vec f)$.

\subsection{Upper bound}\label{sec:general_upper}


We begin with a  sublinear regret upper bound based on a specific discretization of the space. We  first discretize 
the directions of all the vectors that start from the point $\vec v$ and end at any point $\vec f\in\mathcal{F}$. This is achieved by the design of  spherical code from information theory~\citep{Manin_2019, kabatiansky1978bounds}. Next, we  discretize the length of the vectors of each direction uniformly.
Concretely, for a given set of candidate contracts $\mathcal{F}$, we define the discretized set $\mathcal{S}_\epsilon(\mathcal{F})$ as follows.
\begin{definition}[Discretized set $\mathcal{V}_\epsilon$ and $\mathcal{S}_\epsilon$]\label{def:discretization}
Given a set of candidate contracts $\mathcal{F}\subset [0, 1]^m$ and the value of the principle $\vec v\in[0, 1]^m$ (not necessarily in $\mathcal{F}$), we define the discretized set $\mathcal{S}_\epsilon$ via the following two-step discretization procedure:
\begin{enumerate}
    \item {\bf Step 1: vector direction discretization.} Let $\mathcal{V}_{\epsilon}(\mathcal{F})\subset \mathcal{S}^{m-1}$ denote the set of minimum $\epsilon$-covering under the normalized inner product for the set of vectors  $\{\vec f - \vec v \mid \vec f\in\mathcal{F}\}$. Concretely, we define $\mathcal{V}_{\epsilon}(\mathcal{F})$ as  the minimum set such that for any  vector $\vec f - \vec v$ with $\vec f\in\mathcal{F}$, one can find a vector $\gamma \in \mathcal{V}_{\epsilon}(\mathcal{F})$ such that  ${\langle \vec f-\vec v, \vec \gamma\rangle}\geq \cos (\epsilon) \|\vec f-\vec v\|_2$, and $\|\gamma\|_2=1$.
    \item {\bf Step 2: vector length discretization.} We set
    \begin{align} \label{eq.discretization}
    \mathcal{S}_\epsilon(\mathcal{F}) = \{\vec v + \sqrt{m}\beta  \cdot \vec \gamma \mid \beta \in \{k\epsilon\}_{k=0}^{1/\epsilon}, \vec \gamma\in\mathcal{V}_{\epsilon^2}(\mathcal{F})\} \cap \mathcal{F}.
    \end{align}
\end{enumerate}
\end{definition}

Let $N(\mathcal{F}, \epsilon) = |V_\epsilon(\mathcal{F})|$ be the covering number of the space $\mathcal{F}$ under the normalized inner product. Similarly, we define $M(\mathcal{F}, \epsilon)$ as the maximum $\epsilon$-packing number under the same discrepancy measure, which is can be upper bounded by  the maximum cardinality of the spherical code~\citep{Manin_2019, kabatiansky1978bounds}.  We  define the intrinsic dimension of the contract design problem as the logarithmic of the covering number of the space $\mathcal{F}$ via a spherical code.  
\begin{definition}[Intrinsic dimension $d(\mathcal{F})$]\label{def.intrisic_d}
 We define the intrinsic dimension of the contract design problem using the metric entropy of the space $\mathcal{F}$:
 \begin{align*}
     d(\mathcal{F}) \coloneqq  \sup_{\epsilon<0.1}\frac{\log N(\mathcal{F},\epsilon)}{\log(1/\epsilon)}.
 \end{align*}
\end{definition}

This implies that for any $\epsilon<0.1$, we have $|\mathcal{V}_\epsilon| \leq (\frac{1}{\epsilon})^{d(\mathcal{F})}$ and 
$|\mathcal{S}_\epsilon| \leq (\frac{1}{\epsilon})^{2d(\mathcal{F})+1}$.
As a concrete example, for the largest possible contract space $\mathcal{F} = [0, 1]^m$, 
it is shown that  one has 
$M([0,1]^m, \epsilon) =  O((\frac{1}{\epsilon})^{m-1})$   when $\epsilon<0.1$ (see e.g.~\cite{386969}). Thus  we know that $ N([0,1]^m, \epsilon) = O((\frac{1}{\epsilon})^{m-1})$ since $ M([0,1]^m, 2\epsilon) \leq N([0,1]^m, \epsilon)\leq M([0,1]^m, \epsilon)$. This gives  that  $|\mathcal{S}_\epsilon([0,1]^m)|\leq O((\frac{1}{\epsilon})^{2m-1}) $. 
This upper bound also implies that $d(\mathcal{F}) \leq m-1$ for any $\mathcal{F}\subset [0, 1]^m$. Note that although the above upper bound on covering number holds information-theoretically, it is unclear whether one can design efficient algorithm to find them in high dimensions. In fact, the best known linear-programming based approach gives a covering of $ N([0,1]^m, \epsilon) = O((\frac{1}{\epsilon})^{C\cdot m})$ for some constant $C$~\citep{Manin_2019, kabatiansky1978bounds}. 

Now we are ready to present the main algorithm and result, which provide an upper bound for the Stackelberg regret based on the intrinsic dimension. 
We first discretize the space $\mathcal{F}$ into $\mathcal{S}_\epsilon(\mathcal{F})$, and then  run any worst-case optimal regret minimization algorithm for finite-armed bandit problem. We take the Upper Confidence Bond ($\mathsf{UCB}$) algorithm as one example and include the detailed pseudo-code in Appendix \S \ref{app:ucb}. 
The UCB algorithm requires an observation of the sampled reward in each round. In the contract design problem in  Procedure~\ref{procedure.contract}, the sampled reward will be $\vec v(o^{(t)}) - \vec f^{(t)}(o^{(t)})$,   with population mean $u(\vec f^{(t)})$.

\begin{algorithm}[!htbp]
	\caption{Online finite contract design}
	\label{alg:finite}
	\textbf{Input:} The contract space $\mathcal{F}\subset [0, 1]^m$, total rounds $T$.\\ 
 Set $\epsilon = T^{-1/(2d(\mathcal{F})+1)}$ and discretized set $\mathcal{S}_\epsilon(\mathcal{F})$ as in Definition~\ref{def:discretization}.\\
 Run $\mathsf{UCB}$ for $T$ rounds with action set $\mathcal{S}_\epsilon(\mathcal{F})$.  
	
\end{algorithm}
\begin{theorem}\label{thm:general_upper} 
Assume that $\epsilon<0.1$ in Algorithm~\ref{alg:finite}. Let $\pi$ be the policy produced in Algorithm~\ref{alg:finite}. For some universal constant $C$, we have 
\begin{align*}
 R_T(\pi, \mathcal{F}) \leq C\cdot \sqrt{m}\cdot T^{(2d(\mathcal{F})+2)/(2d(\mathcal{F})+3)}\cdot \log(T).
\end{align*}
\end{theorem}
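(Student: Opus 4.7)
The plan is to decompose the Stackelberg regret into a discretization error and a finite-armed bandit regret on $\mathcal{S}_\epsilon(\mathcal{F})$, then balance the two by the choice of $\epsilon$. Concretely, letting $\vec f^\star = \argmax_{\vec f \in \mathcal{F}} u(\vec f)$ and $\vec f^\star_\epsilon = \argmax_{\vec f \in \mathcal{S}_\epsilon(\mathcal{F})} u(\vec f)$, I would write
\begin{align*}
R_T(\pi, \mathcal{F}) \;=\; T\bigl(u(\vec f^\star) - u(\vec f^\star_\epsilon)\bigr) \;+\; \sup_{\vec f \in \mathcal{S}_\epsilon(\mathcal{F})} \sum_{t=1}^T \mathbb{E}\bigl[u(\vec f) - u(\vec f^{(t)})\bigr],
\end{align*}
and bound the two summands separately.

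The crux is the first term, which is where Lemma~\ref{lem:intro_continuity} enters and where the algorithm's non-Lipschitz utility would normally derail a naive discretization argument. Let $L = \|\vec f^\star - \vec v\|_2$ and $\vec u = (\vec f^\star - \vec v)/L$. By the definition of $\mathcal{V}_{\epsilon^2}(\mathcal{F})$ there is a unit $\vec\gamma \in \mathcal{V}_{\epsilon^2}$ with $\langle \vec u,\vec\gamma\rangle \geq \cos(\epsilon^2)$, so $\|\vec\gamma - \vec u\|_2 \leq \epsilon^2$. I would then pick $\beta \in \{k\epsilon\}$ with $0 \leq L - \sqrt{m}\beta \leq \sqrt{m}\epsilon$ (for the edge case $L < \sqrt{m}\epsilon$, use $\beta = 0$, i.e.\ the apex $\vec v$, or, if $\vec v \notin \mathcal{F}$, shrink $\vec f^\star$ toward $\vec v$ a tiny amount) and set $\vec f' = \vec v + \sqrt{m}\beta\vec\gamma \in \mathcal{S}_\epsilon(\mathcal{F})$. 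A direct computation yields the decomposition
\begin{align*}
\vec f' - \vec f^\star \;=\; \alpha(\vec v - \vec f^\star) + \vec r, \qquad \alpha = \frac{L - \sqrt{m}\beta}{L},\qquad \vec r = \sqrt{m}\beta(\vec\gamma - \vec u),
\end{align*}
which is exactly the form required by Lemma~\ref{lem:intro_continuity}. Plugging in the bounds $\|\vec f' - \vec f^\star\|_2 \leq \sqrt{m}\epsilon + \sqrt{m}\beta\epsilon^2 = O(\sqrt{m}\epsilon)$ and $\|\vec r\|_2/\alpha \leq \sqrt{m}\beta\epsilon^2 \cdot L/(L - \sqrt{m}\beta) \leq \sqrt{m}\epsilon$ (using $\beta \leq 1$, $L \leq \sqrt{m}$, and the chosen gap $L - \sqrt{m}\beta \geq \sqrt{m}\epsilon$ as long as it is positive) gives $u(\vec f^\star) - u(\vec f^\star_\epsilon) \leq u(\vec f^\star) - u(\vec f') = O(\sqrt{m}\epsilon)$. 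This is the main technical obstacle: one has to choose $\beta$ so that the \emph{ratio} $\|\vec r\|_2/\alpha$ stays bounded, because Lemma~\ref{lem:intro_continuity} blows up if $\alpha$ is driven to zero while $\vec r$ is still $\Theta(\sqrt{m}\epsilon^2)$; the saving comes from the quadratic tightness $\|\vec\gamma - \vec u\|_2 \lesssim \epsilon^2$ of the spherical-code covering, which is exactly one factor of $\epsilon$ stronger than the length-grid resolution $\sqrt{m}\epsilon$.

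For the second term, since the per-round reward $v(o^{(t)}) - f^{(t)}(o^{(t)}) \in [-1,1]$ is a sub-Gaussian estimate of $u(\vec f^{(t)})$ and $|\mathcal{S}_\epsilon(\mathcal{F})| \leq (1/\epsilon)^{2d(\mathcal{F})+1}$, standard UCB analysis (as in Appendix~\S\ref{app:ucb}) gives a finite-armed regret of order $\sqrt{|\mathcal{S}_\epsilon(\mathcal{F})|\,T\log T} = O\bigl(T^{1/2}\epsilon^{-(2d(\mathcal{F})+1)/2}\sqrt{\log T}\bigr)$. Combining with the discretization bound, the total regret is
\begin{align*}
R_T(\pi,\mathcal{F}) \;\leq\; C\sqrt{m}\,T\epsilon \;+\; C\,T^{1/2}\epsilon^{-(2d(\mathcal{F})+1)/2}\sqrt{\log T}.
\end{align*}
Balancing the two summands leads to the choice $\epsilon \asymp T^{-1/(2d(\mathcal{F})+3)}$, at which both become $O(\sqrt{m}\,T^{(2d(\mathcal{F})+2)/(2d(\mathcal{F})+3)}\sqrt{\log T})$, yielding the claimed bound up to a $\sqrt{\log T}$ that we absorb into the $\log T$ factor.
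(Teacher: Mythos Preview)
Your overall approach matches the paper's: split the regret into discretization error plus finite-armed UCB regret on $\mathcal{S}_\epsilon(\mathcal{F})$, control the former via Lemma~\ref{lem:intro_continuity} by locating a grid point on a covering ray from $\vec v$, then balance $\epsilon$. Your algebraic decomposition $\vec f'-\vec f^\star=\alpha(\vec v-\vec f^\star)+\vec r$ with $\alpha=(L-\sqrt m\,\beta)/L$ and $\vec r=\sqrt m\,\beta(\vec\gamma-\vec u)$ is correct and is a clean, coordinate-free version of the paper's two-dimensional geometric argument. You have also correctly identified the crux: the ratio $\|\vec r\|_2/\alpha$ must be kept $O(\sqrt m\,\epsilon)$, and it is precisely the $\epsilon^2$-tight angular covering that buys the extra factor of $\epsilon$ needed for this.

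The one genuine slip is the stated choice of $\beta$. You first require $0\le L-\sqrt m\,\beta\le\sqrt m\,\epsilon$, but then bound $\|\vec r\|_2/\alpha$ by invoking ``the chosen gap $L-\sqrt m\,\beta\ge\sqrt m\,\epsilon$''; these two constraints are incompatible except at equality, and under the first one $\alpha=(L-\sqrt m\,\beta)/L$ can be arbitrarily close to zero, so the ratio bound fails. The fix---which is exactly what the paper does---is to step back by at least one grid unit: choose $\beta$ so that $L-\sqrt m\,\beta$ lies in a window bounded \emph{away} from zero, say $L-\sqrt m\,\beta\in[c\sqrt m\,\epsilon,(c{+}1)\sqrt m\,\epsilon]$ for a constant $c\ge 1$ (the paper uses $c\in(4,5)$, after first disposing of the edge case $L\le 10\sqrt m\,\epsilon$ via $u(\vec v)=0$ and Lemma~\ref{lem:intro_continuity}, which is the same as your ``$\beta=0$'' idea). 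With that adjustment $\|\vec f'-\vec f^\star\|_2=O(\sqrt m\,\epsilon)$ still holds, while now $\alpha\ge c\sqrt m\,\epsilon/L\ge c\epsilon$ gives $\|\vec r\|_2/\alpha\le \sqrt m\,\beta\,\epsilon^2\cdot L/(c\sqrt m\,\epsilon)\le\sqrt m\,\epsilon/c$, and the rest of your argument goes through verbatim.
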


The proof is deferred to Appendix \S\ref{app:proof_upper}. As a direct result of Theorem~\ref{thm:general_upper} and $d([0, 1]^m)\leq m-1$, we know that 
when $\mathcal{F}=[0, 1]^m$, one can achieve a Stackelberg regret $O(\sqrt{m}\cdot T^{1-{1}/{(2m+1)}})$. We also show in Appendix \S\ref{app:fosd} that when we have first-order stochastic dominance assumption and extra structural assumption on the contract space, one can achieve a rate of $\widetilde O(T^{1-1/(m+2)})$, which improves the constant factor in the exponent.  
A few more remarks and discussions on the upper bound are listed below.

\vskip 0.1in
\begin{remark}[Comparison with existing results]
The prior work of~\citet{ho2016adaptive} first studies the  online contract design problem. They propose an adaptive zooming algorithm that gradually zooms into the smaller $m$-dimensional cubes that may contain the optimal contract. The algorithm also provides the first upper bound that is exponential in $m$. This approach has the following limitations which we address in this work:
\begin{itemize}
    \item {\it Bounding the discretization error.}  As is discussed in~\citet{ho2016adaptive}, since the set $\mathcal{F}$ can be continuous, one known approach to obtain guarantees relative to $\sup_{\vec f\in\mathcal{F}} \mathbb{E}[u(\vec f)]$ is to start with some finite, discretized set $\mathcal{F}_\epsilon \subset \mathcal{F}$, design an algorithm with guarantees relative to $\sup_{\vec f\in\mathcal{F}_\epsilon} \mathbb{E}[u(\vec f)]$, and then, as a separate result, bound
the discretization error $$\sup_{\vec f\in\mathcal{F}} \mathbb{E}[u(\vec f)] - \sup_{\vec f\in\mathcal{F}_\epsilon} \mathbb{E}[u(\vec f)]. $$ However, they are not able to bound the generalization error. Thus their result does not apply to the case when the contract space is continuous, and cannot treat the general case when  $\mathcal{F}=[0,1]^m$. 
    \item {\it Require monotone contract.} Besides being discrete, the set $\mathcal{F}$ is further restricted to monotone contracts, $f_i\leq f_j,\forall i\leq j$, which may not contain the optimal contract.
    \item {\it Require stochastic dominance}. They make an extra assumption of first-order stochastic dominance between each of the production distributions from the actions. 
    \item 
    {\it More explicit complexity bound.} The sample complexity was expressed in terms of a quantity called \emph{virtual width}, which is hard to evaluate  in the general case  of  $\mathcal{F}=[0, 1]^m$. We introduce another notion of dimension that is explicit for this case.
    \item {\it Hardness lower bound.} The earlier work did not provide a lower bound.
\end{itemize}
\citet{cohen2022learning} extends the analysis to  the case when the utility of agent has bounded risk-aversion, and bounds the discretization error under a stronger assumption of monotone-smooth contract, which requires that $0\leq f_{i+1} - f_i \leq v_{i+1} - v_i, \forall i\in[m].$      Our work addresses all of these limitations, establishing sub-linear regret for the general continuous set $\mathcal{F}=[0, 1]^m$ without any extra assumptions. We also provide a nearly matching lower bound for   $\mathcal{F}=[0,1]^m$ in the next section. 
\end{remark}

\begin{remark}[{Implication for linear contract}] \label{rmk:linear}
In the case of linear contracts with $\mathcal{F} = \{\alpha\cdot \vec v \mid  \alpha\in[0, 1]\}$, we have $d(\mathcal{F}) = 0$ since it suffices to use a single unit vector $\vec v/\|\vec v\|_2$ to cover the directions of all the vectors that start from $\vec v$ and end at $\alpha\cdot \vec v$. This gives a rate of $O(\sqrt{m}\cdot T^{2/3})$, which matches the lower bound provided in~\citet{ho2016adaptive, kleinberg2003value} for dynamic pricing up to a factor of $\sqrt{m}$ (note that dynamic pricing is a special case of a linear contract design when there are only two outcomes: accepting or denying the item under the given price). In Section~\ref{sec:linear} we provide a tight upper bound for the linear contract case that exactly matches the lower bound and extend the result to the case when the outcome size is infinite.
\end{remark}

\begin{remark}[Examples beyond linear contract]
Our proposed complexity measure also provides new examples interpolating between the family of general contracts and linear contracts. As a concrete example, consider the family of $\mathcal{F}(\mathcal{R}) = \{\vec v + \beta\cdot \vec r \mid \beta\in[0, 1], \vec r\in\mathcal{R}\}$.    The regret upper bound is still polynomial when $|\mathcal{R}|$ is some constant, or the spherical code covering number of $\mathcal{R}$ is small. This recovers the linear contract when $\mathcal{R} = \{-\vec v\}$. 
\end{remark}

\begin{remark}
[The leading $\sqrt{m}$ factor in the bound] There is a leading $\sqrt{m}$ in the upper bound that we expect is an artifact of the proof. This is due to that in our upper bound of continuity in Lemma~\ref{lem:gen_continuity} we upper bound the $\ell_\infty$ norm with the $\ell_2$ norm. This factor can be improved if one can find a  good covering that utilizes the property of $\ell_\infty$ norm. 
\end{remark}

\begin{remark}
[The intuition behind two-step discretization] We discuss the insights behind the design of two-step discretization. As is noted in~\citet{ho2016adaptive}, the utility function $u(\cdot)$ is not continuous in general. We show, however, in Lemma~\ref{lem:intro_continuity} that it is continuous \emph{along certain directions}. In particular, for any $\vec f$ and $\vec f'$, we can bound $u(\vec f) - u(\vec f')$ if 
\begin{itemize}
    \item $\|\vec f - \vec f'\|_2\leq \epsilon$; 
    \item The angle between vector $\vec f - \vec v$ and $\vec f' - \vec v$ is upper bounded by $\epsilon$.
    \end{itemize}
This implies that for any given $\vec f$, $\vec f'$, if $\vec f'$ lies in the high-dimensional cone with $\vec f$ as its apex,  $\vec f - \vec v$ as its direction of the axis, and bounded radius and angle, then $u(\vec f) - u(\vec f')$ is bounded (see Figure~\ref{fig:covering} (a)). 

Given such a structure of continuity, one aims at covering the space $\mathcal{F}$ with the cones. To achieve this, we first stack the cones uniformly to cylinders that start from $\vec v$ as in Figure~\ref{fig:covering}. Then it suffices to cover the space with cylinders. This can be shown to be equivalent to the coverage of the angles in the sphere with unit vectors, which is referred to as a spherical code. 

\begin{figure}[!htbp]
    \centering
    \includegraphics[width=0.6\linewidth]{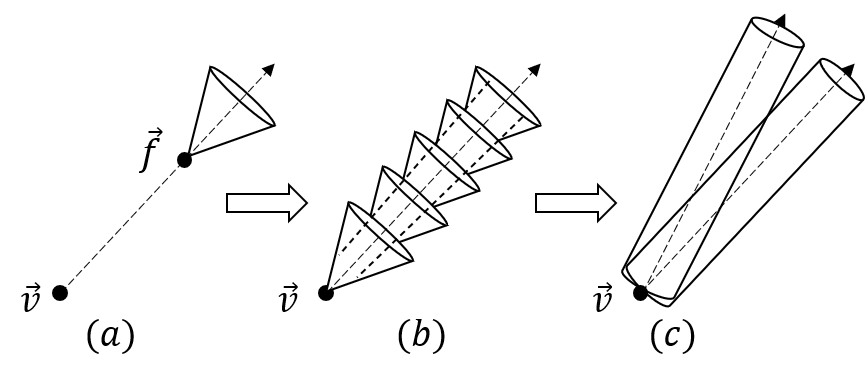}
    \caption{Construction of the covering of the space $\mathcal{F}$. (a): For any $\vec f$, one can show that there is a cone such that for any $\vec f'$ in the cone, the utility between $\vec f$ and $\vec f'$ are close to each other; (b): We stack the cone by uniformly discretizing the vector that starts from $\vec v$. This forms a cylinder (the points that are close to $\vec v$ can be handled separately). (c): The problem is then reduced to covering the space using cylinders that start from $\vec v$. }
    \label{fig:covering}
\end{figure}
\end{remark}

\subsection{Lower bound}
We  consider the lower bound for the problem. In particular, we show that the sample complexity must be exponential in $m$ in the worst case when $\mathcal{F} = [0, 1]^m$.
\begin{theorem}\label{thm:general_lower}
Let $\mathcal{I} = \{\mathcal{O}, \mathcal{A}, \vec v, c, p \mid |\mathcal{O}|=m, \vec v=(0, 1, 1,\cdots, 1)\}$ be the set of all problem instances with all agents having the same type, outcome size $m$ and principal value $1$ for all non-null outcomes. Let $u_i$ be the utility for the principal under instance $i\in\mathcal{I}$.  Then for any policy $\pi$ that produces a sequence of (possibly random and history-dependent) queries $(\vec f^{(t)})_{t=1}^T$, we have for some universal constant $C$ and $T>2^{m+2}$ that
\begin{align*}
  \inf_\pi \sup_{i\in\mathcal{I}} \sum_{t=1}^T\mathbb{E}[u_i(\vec f^\star)- u_i(\vec f^{(t)})] \geq {C\cdot T^{1-1/(m+2)}}.
\end{align*}

\end{theorem}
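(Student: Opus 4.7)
The plan is to apply the standard reduction from bandit regret to multiple hypothesis testing (a Le Cam / Fano argument), using a family of $K = \Theta(\epsilon^{-m})$ contract-design instances built from a common partitioning of $[0,1]^m$ into disjoint axis-aligned cubes of side-length $\epsilon$. For each cube $C_k$ (with center $\vec f_k$) I would construct an instance $I_k \in \mathcal{I}$, together with one ``null'' instance $I_0$ in which only the null action is available. The instance $I_k$ should be designed so that the agent's best-response satisfies: (i) $a^\star_{I_k}(\vec f) = 0$ for $\vec f \notin C_k$, producing outcome $0$ and principal utility $0$; and (ii) $a^\star_{I_k}(\vec f) = a_k$ for $\vec f \in C_k$, producing an outcome distribution $p_k$ whose KL divergence from $\delta_0$ is $O(\epsilon^2)$ and yielding principal utility $\Theta(\epsilon)$.

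Given such a family, the information-theoretic part is standard. Let $N_k^\pi(\mathcal I)$ denote the expected number of rounds policy $\pi$ queries inside $C_k$ on instance $\mathcal{I}$. Since all cubes are observationally indistinguishable under $I_0$ and $\sum_k N_k^\pi(I_0) \le T$, at least $K/2$ cubes $k$ satisfy $N_k^\pi(I_0) \le 2T/K$. For any such $k$, the two instances $I_0$ and $I_k$ only differ on queries inside $C_k$, so the chain rule together with the per-round $O(\epsilon^{2})$ KL bound gives
\[
\mathrm{KL}(P^{T}_{\pi,I_0} \,\|\, P^{T}_{\pi,I_k}) \;\le\; N_k^\pi(I_0)\cdot O(\epsilon^{2}) \;\le\; O(T \epsilon^{2}/K).
\]
Pinsker's inequality yields $\| P^{T}_{\pi,I_0} - P^{T}_{\pi,I_k} \|_{\mathrm{TV}} \le O(\epsilon\sqrt{T/K})$, and a change-of-measure on the $[0,T]$-bounded functional $N_k^\pi$ gives $N_k^\pi(I_k) \le 2T/K + O(\epsilon\,T\sqrt{T/K})$. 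The Stackelberg regret on $I_k$ is at least $\epsilon\cdot(T - N_k^\pi(I_k))$; taking the supremum over the $\ge K/2$ good cubes yields worst-case regret $\Omega(\epsilon T)$ provided both $T/K$ and $\epsilon T\sqrt{T/K}$ are at most $T/4$. Choosing $\epsilon = \Theta(T^{-1/(m+2)})$, so that $K = T^{m/(m+2)}$, balances the two terms, and the hypothesis $T > 2^{m+2}$ ensures $\epsilon$ is small enough for the partition to fit inside $[0,1]^m$. This produces the claimed $\Omega(T^{1-1/(m+2)})$ lower bound.

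The main obstacle is Step 1: engineering the instances $I_k$. Unlike in a Lipschitz bandit, the utility landscape is not a free parameter but is forced by the agent's best-response, which (being a maximum of affine functions of $\vec f$) naturally induces half-space, not cube-shaped, decision regions. To carve out the cube $C_k$ as the preferred region of the target action $a_k$, I expect the construction to augment the action set by $O(m)$ ``shadow'' actions which deterministically produce outcome $0$ (hence are observationally equivalent to the null action), but whose costs are tuned so that each shadow action out-competes $a_k$ on exactly one of the $2m$ half-spaces bounding $C_k$. One must then simultaneously calibrate the production vector $p_k$ of $a_k$, its cost $c_k$, and the shadow-action costs so that (a) the agent's preferred region equals $C_k$, (b) principal utility on $C_k$ is $\Theta(\epsilon)$, and (c) $\mathrm{KL}(p_k \,\|\, \delta_0) = O(\epsilon^{2})$. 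This ``delicate design of the production function and cost function,'' flagged in the technical-contributions section, is what distinguishes the argument from a textbook Lipschitz-bandit lower bound and genuinely uses the Stackelberg structure of the contract-design problem.
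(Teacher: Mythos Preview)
Your plan follows the right information-theoretic template, but the instance construction has a genuine gap that breaks the argument at the claimed rate.

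\medskip
\textbf{Shadow actions cannot be observationally null.} Any action that produces outcome $0$ deterministically and has nonnegative cost gives the agent utility $f_0 - c_s \le f_0$, which is weakly dominated by the null action. Such an action is therefore never strictly preferred and cannot carve out a half-space. Consequently, in your instance $I_k$ the region where $a_k$ is the best response is exactly the half-space $\{\vec f:\sum_o p_k(o)f_o - c_k \ge f_0\}$, not the cube $C_k$. To make the $a_k$-region a cube you would need competing actions with \emph{different} production distributions---but then they are no longer observationally equivalent to the null action, and the comparison to $I_0$ collapses.

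\medskip
\textbf{Even if the regions could be built, the KL--utility tradeoff is wrong.} With $\vec v=(0,1,\dots,1)$, the principal's utility from $a_k$ is at most the non-null mass $q:=\sum_{o\ge 1}p_k(o)$, while the per-round divergence from the null observation satisfies $\mathrm{KL}(\delta_0\|p_k)=-\log(1-q)\ge q$ (and $\mathrm{KL}(p_k\|\delta_0)=+\infty$). Hence requirements (b) and (c)---utility $\Theta(\epsilon)$ and KL $O(\epsilon^2)$---are mutually incompatible: both are $\Theta(q)$. With KL of the same order as the gap, balancing in your Le~Cam step gives only $\Omega(T^{1-1/(m+1)})$, not $\Omega(T^{1-1/(m+2)})$.

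\medskip
\textbf{How the paper resolves both issues.} The paper does \emph{not} compare to a null instance. Instead, every instance shares the same exponential action set $\{a(k_1,\dots,k_m)\}$, with production distributions $p(k_1,\dots,k_m)$ that place mass $\tfrac{1}{2(1-k_i\epsilon)m}$ on outcome $i$ and vary smoothly in the index. Because the agent's utility decomposes coordinatewise, the best-response regions are automatically the axis-aligned cubes $\{k_i\epsilon\le f_i-f_0<(k_i{+}1)\epsilon\}$, and the baseline principal utility is $1/2$ everywhere---decoupling the gap from the outcome mass. The hard instances are then obtained by lowering the cost of a single action by $\Theta(\epsilon^2/m)$, creating a $\Theta(\epsilon)$ utility advantage on one cube. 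The reference instance is another perturbed-cost instance, and on the relevant cube the two observation laws are neighboring distributions $p(l_1{\pm}1,\dots)$ versus $p(l_1,\dots)$, whose $\chi^2$ (hence KL) distance is $O(\epsilon^2)$. This $O(\epsilon^2)$-KL against a $\Theta(\epsilon)$ gap is exactly what yields the extra factor in the exponent and the claimed $\Omega(T^{1-1/(m+2)})$.
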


The proof is deferred to Appendix \S\ref{app:proof_lower}. The result shows that one has to acquire $\Omega((1/\epsilon)^{m+2})$ samples in order for the average regret $R_T(\pi, [0, 1]^m)/T$ to be as small as $\epsilon$. Thus the sample complexity must be exponential in $m$. 

We briefly discuss the intuition and construction of the lower bound. 
Although the contract space is continuous, we split the whole space into discrete regimes such that the best response within each regime remains the same. We design an exponential number of instances of the problems such that each of the instances has different optimal contracts, and furthermore, 
for any  query $\vec f\in\mathcal{F}$ and any two instances, at least one of the instances has large sub-optimality.  This necessitates the identification of the exact instance. However, we design the probability distribution such that  they are close enough under different instances  and cannot be easily distinguished by testing.

To achieve this, we first divide the space $[0, 1]^m$ uniformly to  $( 1/\epsilon)^m$ disjoint cubes. And we design $( 1/\epsilon)^m$ actions such that each action is taken when the queried contract is in one of the cubes. This approximately reduces the continuous query space to a discrete space.  Then for each instance, we choose one of the actions and lower its cost so that it will lead to a larger utility function. We show that one needs an exponential number of samples to distinguish between all the instances and find out which cube the optimal contract lies in. 

Compared  to the traditional minimax lower bound in finite-armed bandits, the main difficulty is that the observation distribution from any query $\vec f$ is coupled by the maximum operator. In traditional finite-armed bandits, querying a certain action (arm) will always return a random sample from the reward of that action no matter how the rewards are designed. In contrast, for the online contract design problem, one can only change the reward (utility) of the principal in different instances through the probability distribution and the cost of the agents' action. Moreover, the change in probability distribution or the cost will in turn change the best response of the agent. Thus when querying some $\vec f$ in two different instances, the agent may take completely different actions. 

The current   construction of lower bound does not satisfy the first-order stochastic dominance and monotone or monotone-smooth assumption.  Whether making those assumptions would lead to better regret or sample complexity bounds remains as an open question.

\section{Learning Linear Contracts with Continuous Outcomes}
\label{sec:linear}
In  Remark~\ref{rmk:linear}, we know that in the case of linear contracts, $\mathcal{F}_{\mathrm{lin}} = \{\alpha\cdot \vec v \mid  \alpha\in[0, 1]\}$, Theorem~\ref{thm:general_upper} provides an upper bound of $O(\sqrt{m}\cdot T^{2/3}\cdot \log(T))$. In this section, we improve  the result to get rid of the  extra $\sqrt{m}$ factor, and generalize the result to the case when the outcome set is continuous (infinite), indexed by $[0, 1]$. In this case, one searches the contract only in the space of  $\mathcal{F}_{\mathrm{lin}} = \{\alpha \cdot v(\cdot) | \alpha\in[0, 1]\}$, where $v$ is the fixed known reward function for the outcome. We can rewrite the expected utility for the principal in Equation (\ref{eq:principal_ic}) as
\begin{align}\label{eq:linear_expected_utility}
    u(\alpha) & = \mathbb{E}_{i\sim\rho}\left[\mathbb{E}_{X\sim p_i(\cdot|a^\star_i(\alpha))}[(1-\alpha) v(X)]\right],
    \\ 
&\text{ where }   a_i^\star(\alpha) = \argmax_{a\in\mathcal{A}} \mathbb{E}_{X\sim p_i(\cdot|a)}[\alpha \cdot v(X)] - c_i(a), \forall i\in\mathcal{T}. \nonumber 
\end{align}
Thus the  optimal single contract can be found by  $\alpha^\star = \argmax_{\alpha\in[0,1]}  u(\alpha)$.

\subsection{Upper bound}
We show  that linear contracts can be learned efficiently even when the set of outcomes and actions both have an infinite number of elements. The algorithm is exactly that for a one-dimensional continuum-armed bandit problem, which first uniformly discretizes the interval $[0, 1]$, and runs no-regret algorithms for finite-armed bandit problems. The key observation is that although $u(\alpha)$ is not continuous with respect to $\alpha$, it is indeed right-continuous, i.e. $u(\alpha) - u(\alpha + \epsilon)$ is bounded by $\epsilon$. This enables  the application of  standard continuum-armed bandit algorithms.

\begin{algorithm}[!htbp]
	\caption{Online linear contract design}
	\label{alg:linear}
	\textbf{Input:} The total rounds $T$.\\ 
	Set $\epsilon = (T/\log(T))^{-1/3}$.  Define the uniformly discretized set of parameter $\mathcal{S}_\epsilon = \{0, \epsilon, 2\epsilon, \cdots, 1 \}$. \\
     Run $\mathsf{UCB}$ for $T$ rounds with action set $\mathcal{S}_\epsilon$ and reward $u(\cdot)$. 
	
\end{algorithm}

\begin{theorem}\label{thm:linear_upper}
Let $\pi$ be the policy specified in Algorithm~\ref{alg:linear}. Then one has 
\begin{align*}
R_T(\pi, \mathcal{F}_{\mathrm{lin}})  = \sum_{t=1}^T \mathbb{E}[u(\alpha^\star) - u(\alpha^{(t)})] \leq 2 T^{2/3}\cdot \log^{1/3}(T).
\end{align*}
\end{theorem}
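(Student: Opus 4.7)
My plan is to decompose the Stackelberg regret against $\alpha^\star \in [0,1]$ into a discretization term and a finite-armed bandit regret term on the grid $\mathcal{S}_\epsilon$, then optimize $\epsilon$. The central structural claim I need (the one-dimensional analogue of Lemma~\ref{lem:intro_continuity} in the linear/continuous-outcome regime) is a one-sided Lipschitz bound: for every $\alpha \in [0,1]$ and $\delta \in [0, 1-\alpha]$,
\begin{align*}
u(\alpha) - u(\alpha + \delta) \;\leq\; \delta.
\end{align*}

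To prove this, I would fix an agent type $i$ and write $a_\alpha := a_i^\star(\alpha)$, $a_{\alpha+\delta} := a_i^\star(\alpha+\delta)$, and $V(a) := \mathbb{E}_{X \sim p_i(\cdot\mid a)}[v(X)]$. The two agent-optimality (revealed-preference) inequalities $(\alpha+\delta)V(a_{\alpha+\delta}) - c_i(a_{\alpha+\delta}) \geq (\alpha+\delta)V(a_\alpha) - c_i(a_\alpha)$ and $\alpha V(a_\alpha) - c_i(a_\alpha) \geq \alpha V(a_{\alpha+\delta}) - c_i(a_{\alpha+\delta})$, added together, give the monotonicity $V(a_{\alpha+\delta}) \geq V(a_\alpha)$. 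Then a direct expansion of $u(\alpha) - u(\alpha+\delta) = \mathbb{E}_i[(1-\alpha-\delta)(V(a_\alpha) - V(a_{\alpha+\delta})) + \delta V(a_\alpha)]$ shows that the first summand is nonpositive and the second is bounded by $\delta$ since $V(a_\alpha) \leq 1$. Taking expectation over $i$ yields the claim. This step is the main (and essentially only) obstacle; the asymmetry matters, because $u$ can jump downward as $\alpha$ crosses a threshold where the agent switches actions, but never upward by more than the factor $\delta$ absorbed into the payment.

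Given this one-sided continuity, for any $\alpha^\star \in [0,1]$ I can find the smallest grid point $\alpha_k \in \mathcal{S}_\epsilon$ with $\alpha_k \geq \alpha^\star$ (which exists since $1 \in \mathcal{S}_\epsilon$), and $\alpha_k - \alpha^\star \leq \epsilon$, so $u(\alpha^\star) - u(\alpha_k) \leq \epsilon$. Hence the discretization error satisfies
\begin{align*}
\sup_{\alpha \in [0,1]} u(\alpha) - \max_{\alpha \in \mathcal{S}_\epsilon} u(\alpha) \;\leq\; \epsilon.
\end{align*}
The observed reward in each round is $(1-\alpha^{(t)}) v(o^{(t)}) \in [0,1]$ with mean $u(\alpha^{(t)})$, so on the finite arm set $\mathcal{S}_\epsilon$ of size $|\mathcal{S}_\epsilon| \leq 1/\epsilon + 1$, the standard $\mathsf{UCB}$ regret bound gives cumulative regret $O(\sqrt{|\mathcal{S}_\epsilon|\,T \log T}) \leq O(\sqrt{T \log(T)/\epsilon})$ against the best grid point.

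Adding the two contributions, the total Stackelberg regret is at most $T\epsilon + C\sqrt{T\log(T)/\epsilon}$ for a universal constant $C$. Balancing the two terms by setting $\epsilon = (T/\log T)^{-1/3}$, as prescribed by Algorithm~\ref{alg:linear}, gives both terms of order $T^{2/3} \log^{1/3}(T)$, and tracking constants carefully yields the stated bound $R_T(\pi, \mathcal{F}_{\mathrm{lin}}) \leq 2 T^{2/3} \log^{1/3}(T)$. Notably, nothing in this argument depends on $|\mathcal{O}|$ or $|\mathcal{A}|$, which is how the result extends to continuous outcomes and actions.
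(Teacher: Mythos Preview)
Your proposal is correct and follows essentially the same route as the paper: both use the two revealed-preference inequalities to obtain monotonicity of $V(a_i^\star(\alpha))$ in $\alpha$, deduce the one-sided bound $u(\alpha)-u(\alpha+\delta)\le\delta$, and then combine the resulting $T\epsilon$ discretization error with the $O(\sqrt{T\log(T)/\epsilon})$ UCB regret at $\epsilon=(T/\log T)^{-1/3}$. The only cosmetic difference is that the paper phrases the monotonicity as ``$u(\alpha)/(1-\alpha)$ is non-decreasing'' before extracting the Lipschitz bound, whereas you expand $u(\alpha)-u(\alpha+\delta)$ directly; also, your parenthetical intuition has the sign of the jumps reversed (at a threshold $u$ jumps \emph{upward}, which is why only the one-sided bound holds), but this does not affect the argument.
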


The proof is deferred to Appendix \S\ref{app:proof_upper_linear}. Note that although we are still searching for optimal linear contracts in an $m$-dimensional space, the family of contracts is parameterized by a scalar $\alpha$. Thus there is no dependence on $m$ in the rate. 
Compared to the Remark~\ref{rmk:linear}, the result gets rid of the extra $\sqrt{m}$ factor and achieves a near-optimal rate up to a logarithmic factor. In fact, the case of linear contracts can be viewed as a generalization of dynamic task pricing~\citep{kleinberg2003value, ho2016adaptive}. In dynamic task pricing, the principal offers a price $f_t$ to purchase one item in each round. The agent agrees to sell if and only if $f_t \geq c_t$
, where $c_t\in[0, 1]$ is the agent’s private cost for
this item and is sampled from some unknown fixed distribution. The principal receives a payoff $1-f_t$ if the item is sold, and zero otherwise.
Our results on linear contracts recover the optimal rate $\widetilde \Theta(T^{2/3})$ for the problem of dynamic task pricing.
\subsection{Lower bound}

Following a similar argument in the lower bound for the general case, we provide the following lower bound for the sample complexity of learning a linear contract. Although the lower bound $\Omega(T^{2/3})$ is already provided in \citet{ho2016adaptive, kleinberg2003value}, their construction is based on the dynamic task pricing problem with multiple types of agents, while our construction only assumes a single type of agent. 
We present the lower bound for linear contracts below.

\begin{theorem}\label{thm:linear_lower}Let $\mathcal{I} = \{\mathcal{O}, \mathcal{A}, v, c, p\mid |\mathcal{O}|=2, \vec v=(0, 1)\}$ be the set of all problem instances with agents all from the same type, two outcomes and fixed value $\vec v=(0, 1)$. Let $\mathcal{F}_{\mathsf{lin}} = \{\alpha \cdot \vec v\mid  \alpha\in[0, 1]\}$. Then for any policy, $\pi$ that produces a sequence of (possibly random and history-dependent) queries $(\alpha^{(t)})_{t=1}^T\in\mathcal{F}$, we have for some universal constant $C$ that
\begin{align*}
  \inf_\pi \sup_{i\in\mathcal{I}} \sum_{t=1}^T\mathbb{E}[u_i(\alpha^\star)- u_i(\alpha^{(t)})] \geq C\cdot T^{2/3}.
\end{align*}
\end{theorem}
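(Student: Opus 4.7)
\textbf{Proof proposal for Theorem~\ref{thm:linear_lower}.} The plan is to establish the $\Omega(T^{2/3})$ lower bound by a standard multi--hypothesis information-theoretic argument, carefully tailored to the single-agent-type, two-outcome linear-contract setting so that the only ``signal'' available to the principal is a Bernoulli outcome whose parameter depends on the agent's best response. Set $\epsilon := c\, T^{-1/3}$ for a small absolute constant $c>0$ and let $K := \lfloor 1/(4\epsilon)\rfloor$. I will exhibit a baseline instance $I_0$ and $K$ alternatives $I_1,\dots,I_K$, all in $\mathcal{I}$, such that for each $i \ge 1$: (i) the optimal linear contract under $I_i$ lies in the $i$-th grid interval $R_i := [i\epsilon, (i+1)\epsilon)$, with $u_i(\alpha^\star_i) \ge u_i(\alpha) + \Theta(\epsilon)$ for every $\alpha \notin R_i$; (ii) for $\alpha \notin R_i$, the outcome distribution under $I_i$ coincides exactly with that under $I_0$; and (iii) for $\alpha \in R_i$, the KL divergence between the Bernoulli observation laws of $I_i$ and $I_0$ is $O(\epsilon^2)$.

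For the construction, I equip every instance with a shared ``staircase'' action set $\{a_0, a_1, \dots, a_K\}$ where $a_0$ is the null action and the remaining actions have production/cost pairs chosen so that the upper envelope of the linear agent-utilities $\alpha \mapsto \alpha\, p(1\mid a_j) - c(a_j)$ has its successive break-points exactly at the grid $\{j\epsilon\}_{j=0}^{K}$ (for instance, taking $p(1\mid a_j) = j\epsilon$ and $c(a_j) = (j\epsilon)^2/2$ realizes the required staircase). The agent's best response is then the step function $\alpha \in R_j \Rightarrow a^\star(\alpha) = a_j$. The baseline $I_0$ uses these parameters as is. Instance $I_i$ perturbs \emph{only} $a_i$, raising $p(1\mid a_i)$ by $\Theta(\epsilon)$ and simultaneously rescaling $c(a_i)$ so that the crossings with $a_{i-1}$ and $a_{i+1}$ remain pinned at $i\epsilon$ and $(i+1)\epsilon$. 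This localized perturbation produces a $\Theta(\epsilon)$ principal-utility bump in $R_i$ while keeping the outcome distribution outside $R_i$ identical to $I_0$, and making the two Bernoulli parameters inside $R_i$ differ by $\Theta(\epsilon)$.

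The lower bound then follows from a standard three-step argument. First, write $N_i$ for the (random) number of queries $\alpha^{(t)}$ landing in $R_i$; by construction and the chain rule for KL, the KL divergence between the algorithm's transcript laws under $I_0$ and $I_i$ is bounded by $\mathbb{E}_{I_0}[N_i]\cdot O(\epsilon^2)$. By the Bretagnolle--Huber inequality, distinguishing $I_0$ from $I_i$ with constant probability demands $\mathbb{E}_{I_0}[N_i] = \Omega(\epsilon^{-2})$. Second, the identity $\sum_{i=1}^K \mathbb{E}_{I_0}[N_i] \le T$ and pigeonhole yield an index $i^\star$ with $\mathbb{E}_{I_0}[N_{i^\star}] \le T/K = O(T\epsilon) = O(cT^{2/3})$; for $c$ small, this is much smaller than $\epsilon^{-2} = c^{-2}T^{2/3}$. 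Third, by a change-of-measure bound (Pinsker applied to the event $\{N_{i^\star} \le T/2\}$), this forces $\mathbb{E}_{I_{i^\star}}[N_{i^\star}] \le T/2$ as well. Since every query outside $R_{i^\star}$ incurs regret $\Theta(\epsilon)$ under $I_{i^\star}$, we conclude $\sup_{i\in\mathcal{I}} \sum_t \mathbb{E}[u_i(\alpha^\star)-u_i(\alpha^{(t)})] \ge \mathbb{E}_{I_{i^\star}}[\Theta(\epsilon)(T-N_{i^\star})] = \Omega(\epsilon T) = \Omega(T^{2/3})$.

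The principal obstacle is the staircase construction itself: the single-agent-type, two-outcome restriction forces every difference between $I_0$ and $I_i$ to be routed through one perturbed action, yet the perturbation must simultaneously (a) preserve the agent's best-response partition $\{R_j\}$, (b) produce only a $\Theta(\epsilon)$ Bernoulli shift inside $R_i$ (so the KL cost per sample is $\Theta(\epsilon^2)$), and (c) generate a genuine $\Theta(\epsilon)$ principal-utility bump that localizes $\alpha^\star_i$ to $R_i$. Balancing these three $\epsilon$-scales through a single coordinated tweak of $p(1\mid a_i)$ and $c(a_i)$ is the quantitative heart of the proof; after it is verified, the Bretagnolle--Huber bound, pigeonhole, and change-of-measure steps are routine.
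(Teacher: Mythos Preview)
Your information-theoretic skeleton---$K=\Theta(1/\epsilon)$ locally differing instances, per-sample KL $O(\epsilon^2)$ in the perturbed cell, pigeonhole over the visit counts $N_i$, then change of measure---matches the paper's. The gap is in the staircase, and it breaks your property (i). With $p(1\mid a_j)=j\epsilon$ and $c(a_j)=(j\epsilon)^2/2$, the principal's baseline utility on the step where $a_j$ is played is $(1-\alpha)\,j\epsilon$, whose maximum over that step is $\Theta(j\epsilon)$; this ranges from $\Theta(\epsilon)$ at $j=1$ to $\Theta(1)$ at $j=K$. Raising $p(1\mid a_i)$ by $\Theta(\epsilon)$ lifts the utility in $R_i$ by only $\Theta(\epsilon)$, which for every $i$ bounded away from $K$ is far too small to overtake the baseline optimum sitting near $j=K$. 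Hence the optimal contract of $I_i$ does \emph{not} lie in $R_i$ for all but $O(1)$ indices, and the pigeonhole step has nothing to work with. What is needed---and what the paper does---is a staircase whose baseline optimum is \emph{flat}: the paper takes $p(1\mid a_k)=\tfrac{1}{2(1-k\epsilon)}$, engineered so that $(1-k\epsilon)\,p(1\mid a_k)=\tfrac12$ at every breakpoint, after which any local $\Theta(\epsilon)$ perturbation genuinely creates the global optimum wherever it is placed.

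A secondary issue: you cannot raise the slope $p(1\mid a_i)$ by $\delta\ne 0$ and simultaneously pin both crossings with a single cost shift $\gamma$; the two crossing equations force $\alpha_L\delta=\gamma=\alpha_R\delta$, hence $\delta=0$. The paper sidesteps this by perturbing only the cost (leaving $p$ fixed), which expands the best-response region by $\Theta(\epsilon)$, and by using only even indices $l$ so that the expanded regions $\mathcal{C}_l$ remain disjoint.
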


The proof is deferred to Appendix \S\ref{app:proof_lower_linear}.  The lower bound matches the upper bound in Theorem~\ref{thm:linear_upper} up to logarithmic factors, which pins down the sample complexity of learning a linear contract.

\begin{remark}[Comparison with Theorem~\ref{thm:general_lower}]
The lower bound provided above matches the rate of Theorem~\ref{thm:general_lower} when taking $m=1$. 
Although they share a very similar construction, the lower bound for the linear contracts is not a direct corollary of Theorem~\ref{thm:general_lower}. 
The insight behind taking $m=1$ in Theorem~\ref{thm:general_lower} for linear contracts is that when the value is a two-dimensional vector $\vec v = (0, 1)$ (note that the value for the first null-outcome must be zero by definition), the linear contract is $\alpha\cdot \vec v = (0, \alpha)$. Thus it suffices to search for the second dimension of the contract while fixing the first dimension to zero. We formalize the argument in the proof of Theorem~\ref{thm:linear_lower}.
\end{remark}
\section{Discussion and Future Work}

We have presented a study of the sample complexity for the repeated hidden-action principal-agent model. Our results open up a number of new areas of inquiry, which we briefly review here.

\textbf{The sample complexity when the action size is small.} Although our analysis shows that  exponential in $m$ samples is both sufficient and necessary for learning general contracts, our lower bound construction is based on instances with action size that is also exponential in $m$. This does not rule out the possibility that one can learn the optimal contract with fewer samples when the action size is small. In particular, we know exactly that the utility function forms a piecewise linear function with discontinuity only in the boundary of each linear function. Thus one may infer the production distribution and cost for each action from past observations. Although learning general piecewise linear functions can be hard in high dimensions~\cite[see, e.g.,][]{dong2021provable}, the extra structure in the  contract design problem  may help improve the sample complexity when the action size is small.

\textbf{Extensions of the contract design problem.} 
There are several variants of the contract design problem. First, a 
common way of specifying a contract is to specify the payment according to the relative rankings of the agents' performance among all the agents, instead of specifying the payment for each outcome. It is unclear how the sample complexity changes when we attempt to learn the optimal contract with the relative performance. 
Second, in practice, the principal usually has partial or full information about the agent. It is interesting to see how context helps with the contract design problem. 
Third, as is the case in traditional Bayesian mechanism design (which
also has hidden types), the single deterministic contract may not be optimal when the agent has multiple hidden types. Instead, the principal may wish to present the agent with a menu of different contracts (each
specifying how much the principal would transfer for each outcome). Each agent now begins by choosing
their most desirable contract from this menu, then picking the action which maximizes their utility as a
result of this contract. It would be interesting to see how the menu complexity and sample complexity depend on the model parameters.

\textbf{Connection with continuum-armed bandits with discontinuities and general Stackelberg games.}
The online contract design problem is a special case of a continuum-armed bandit problem with discontinuous reward. The utility function in this case forms a piecewise linear function with discontinuity. It would be interesting to see similar insights applied to other bandit problems without Lipschitzness. Additionally, our results provide analysis for a special case of general Stackelberg games with myopic players. It would be interesting to see if the techniques and results can be extended to the case of general Stackelberg games.

\bibliography{ref}
\newpage
\appendix
\section{Upper Bound for Uniform Discretization under First-Order Stochastic Dominance}\label{app:fosd}

In Section~\ref{sec:general}, we establish an $O(T^{1-C/m})$ upper bound based on the spherical code. In this section, we show that one can improve the rate to $O(T^{1-1/(m+1)})$ under extra structural assumptions on the contract and the production distribution.

We introduce  the following assumptions on the production distribution and agent behavior following along the lines of~\cite{ho2016adaptive}.
\begin{assumption}[First-Order Stochastic Dominance (FOSD)]\label{asm:fosd}
Let $\mathbb{P}_i(X\geq o|a) = \int_{o'\geq o, o'\in\mathcal{O}} p_i(o'|a) do'$ be probability of arriving at outcomes that are no worse than $o$  when the agent with type $i$ takes action $a$.
For any agent type $i\in\mathcal{T}$ and any two actions $a,a'\in\mathcal{A}$, $a\neq a'$, we say that $a$ has FOSD over $a'$ for type $i$ if $\mathbb{P}_i(X\geq o|a) \geq \mathbb{P}_i(X\geq o|a'), \forall o\in\mathcal{O}$. We say that type $i$ satisfies the FOSD assumption if for any two distinct actions, one action has FOSD over the other for type $i$. We assume that all types satisfy this assumption.
\end{assumption}
To get the improved result, we also need to restrict ourselves to a specific contract family with monotone $v_o-f_o$, formally defined as below. \begin{assumption}[Monotone $v_o - f_o$]\label{asm:monotone}
 let  $\vec g = \vec v - \vec f$. We assume that the optimal contract satisfies $g_i\leq g_j, \forall i<j, i,j\in\mathcal{O}$. 
\end{assumption}
Note that \cite{ho2016adaptive} also make an assumption of 
monotone $f_o$ for the optimal contract. Both the family of monotone $v_o-f_o$ contract and monotone $f_o$ contract contain the family of linear contract. Instead of assuming that the optimal contract lies in the family, we may also change the definition of regret and only compare with the best contract inside the family.

Under both assumptions, we can show that uniform discretization gives an improved rate for learning optimal contract. The algorithms and guarantees are provided below.

\begin{algorithm}[!htbp]
	\caption{Online finite contract design with uniform discretization}
	\label{alg:finite_uniform}
	\For{$t \gets 1, \ldots, T$}{
		Set $\epsilon = (Tm^2/\log(T))^{-1/(m+2)}$. Define the uniformly discretized set of parameter $\mathcal{S}_\epsilon = \{0, \epsilon, 2\epsilon, \cdots, 1 \}^m$. \\
 Run $\mathsf{UCB}$ for $T$ rounds with action set $\mathcal{S}_\epsilon$. 
	}
\end{algorithm}
\begin{theorem}\label{thm:fosd}
Under Assumption~\ref{asm:fosd} and \ref{asm:monotone} one has 
\begin{align*}
 \sum_{t=1}^T \mathbb{E}[u(\vec f^\star) - u(\vec f^{(t)})] \leq m\cdot T^{\frac{m+1}{m+2}}\cdot \left({\log(T)}\right)^{\frac{1}{m+2}}.
\end{align*}
\end{theorem}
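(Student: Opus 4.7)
The plan is a standard discretization-plus-bandit argument: apply UCB on the finite grid $\mathcal{S}_\epsilon$ with $|\mathcal{S}_\epsilon| \leq (2/\epsilon)^m$, bound the discretization error by $O(m\epsilon)$ using Assumptions~\ref{asm:fosd} and~\ref{asm:monotone}, and balance $\epsilon$. Standard UCB on $|\mathcal{S}_\epsilon|$ arms with bounded rewards gives
\[
\sum_{t=1}^T \mathbb{E}\bigl[\max\nolimits_{\vec f \in \mathcal{S}_\epsilon} u(\vec f) - u(\vec f^{(t)})\bigr] \;=\; O\!\left(\sqrt{|\mathcal{S}_\epsilon| T \log T}\right) \;=\; O\!\left(\epsilon^{-m/2}\sqrt{T\log T}\right),
\]
and combined with a discretization bound $u(\vec f^\star) - \max_{\vec f \in \mathcal{S}_\epsilon} u(\vec f) \leq C m\epsilon$, the choice $\epsilon = (\log T /(m^2 T))^{1/(m+2)}$ balances the two terms and produces the claimed $m\cdot T^{(m+1)/(m+2)}\log^{1/(m+2)}(T)$ rate.

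The key step is the discretization bound. I would construct a grid contract $\tilde{\vec f}\in \mathcal{S}_\epsilon$ whose offset $\vec\delta := \tilde{\vec f}-\vec f^\star$ is \emph{monotone non-decreasing in the outcome index $o$} and has $\|\vec\delta\|_\infty \leq m\epsilon$, via a shift-then-round construction such as $\tilde f_o \approx o\epsilon + \epsilon \lceil f^\star_o/\epsilon\rceil$ (with coordinates exceeding $1$ carefully capped). Let $a_i^\star$ and $\tilde a_i$ denote the agent's best response to $\vec f^\star$ and $\tilde{\vec f}$ for each type $i$. Adding the two agent-side incentive inequalities cancels the $\vec f^\star$ and cost terms and yields the Stackelberg identity $\mathbb{E}_{p_i(\cdot|\tilde a_i)}[\vec\delta] \geq \mathbb{E}_{p_i(\cdot|a_i^\star)}[\vec\delta]$. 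Because $\vec\delta$ is monotone non-decreasing, Assumption~\ref{asm:fosd} combined with this identity implies $\tilde a_i$ FOSD-dominates $a_i^\star$: if instead $a_i^\star \succ \tilde a_i$, FOSD on the monotone function $\vec\delta$ gives the reverse inequality, which combined with the identity forces equality of the expected offsets and (by substituting back into both IC constraints) agent indifference between $a_i^\star$ and $\tilde a_i$ at both contracts, resolved in the principal's favor by tie-breaking. Assumption~\ref{asm:monotone} applied to $\vec g^\star = \vec v - \vec f^\star$ then gives $\mathbb{E}_{p_i(\cdot|\tilde a_i)}[\vec g^\star] \geq \mathbb{E}_{p_i(\cdot|a_i^\star)}[\vec g^\star]$, so expanding the utility gap type-by-type,
\[
u(\vec f^\star) - u(\tilde{\vec f}) = \mathbb{E}_{i\sim\rho}\Bigl[\bigl(\mathbb{E}_{p_i(\cdot|a_i^\star)}[\vec g^\star] - \mathbb{E}_{p_i(\cdot|\tilde a_i)}[\vec g^\star]\bigr) + \mathbb{E}_{p_i(\cdot|\tilde a_i)}[\vec\delta]\Bigr] \leq 0 + m\epsilon.
\]

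The main obstacle I foresee is the boundary projection needed to keep $\tilde{\vec f}\in[0,1]^m$: the monotone shift $o\epsilon$ can push some $\tilde f_o$ above $1$, and a naive cap at $1$ can destroy the monotonicity of $\vec\delta$ on which the FOSD argument relies. I plan to handle this by using the monotone structure of $\vec g^\star$ (Assumption~\ref{asm:monotone}) to show that any coordinates requiring capping sit in a constrained ordering relative to each other, permitting a modified shift that maintains both $\tilde{\vec f}\in\mathcal{S}_\epsilon$ and monotone $\vec\delta$ at the cost of at most an additional $O(\epsilon)$ slack absorbed into the constant. A secondary subtlety is the agent's tie-breaking rule at points of indifference, which I would handle by the standard infinitesimal-perturbation argument common in Stackelberg analyses.
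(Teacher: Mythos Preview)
Your proposal is correct and follows essentially the same route as the paper: the paper first proves a lemma that $u(\vec f^\star)-u(\vec f^\star+\vec\gamma)\le \|\vec\gamma\|_\infty$ for any strictly increasing nonnegative $\vec\gamma$ via exactly the add-the-two-IC-inequalities argument you describe (yielding $\sum_o(p_i(o\mid \tilde a_i)-p_i(o\mid a_i^\star))\gamma_o\ge 0$, then contradicting a putative reverse FOSD ordering, then applying Assumption~\ref{asm:monotone} to $\vec v-\vec f^\star$), and then builds the grid point by iteratively rounding each coordinate up just past the previous offset so that $\vec\gamma$ is strictly increasing with $\|\vec\gamma\|_\infty\le m\epsilon$. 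Two small remarks: your tie-breaking worry evaporates because your construction, like the paper's, already yields a \emph{strictly} increasing $\vec\delta$; and your boundary-capping concern is legitimate but the paper's proof does not address it either.
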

\begin{proof}[Proof of Theorem~\ref{thm:fosd}]
We first prove the following lemma:
\begin{lemma}\label{lem:fosd}
Under Assumption~\ref{asm:fosd}, one has \begin{align*}
    u(\vec f^\star) - u(\vec f^\star + \vec \gamma) \leq C\cdot \|\vec\gamma\|_\infty,
\end{align*}
for any $\vec \gamma$ with $0 \leq \gamma_i< \gamma_j$, $i\leq j$.
\end{lemma}
\begin{proof}[Proof of Lemma~\ref{lem:fosd} ]
We first show that under the FOSD assumption in Assumption~\ref{asm:fosd}, one has $a_i^\star(\vec f + \vec\gamma)$ FOSD $a_i^\star(\vec f)$   for any $\vec f$; i.e., it always holds that 
\begin{align*}
    \mathbb{P}_i(X\geq o|a_i^\star(\vec f)) \leq \mathbb{P}_i(X\geq o|a_i^\star(\vec f +\vec\gamma)), \forall o\in\mathcal{O}.
\end{align*}
Assume for contradiction that it does not hold for certain $i$, i.e.,
\begin{align}\label{eq:finite_contradiction_fosd}
    \mathbb{P}_i(X\geq o|a_i^\star(\vec f)) > \mathbb{P}_i(X\geq o|a_i^\star(\vec f +\vec\gamma)), \forall o\in\mathcal{O}.
\end{align}
Recall that $a_i^\star(\vec f)$ is the optimal action under contract $\vec f$. Thus one has 
\begin{align}
    \sum_o { p_i(o|a_i^\star(\vec f))}\cdot f_o - c_i(a_i^\star(\vec f)) \geq \sum_o { p_i(o|a_i^\star(\vec f+\vec \gamma))}\cdot f_o - c_i(a_i^\star(\vec f+\vec \gamma)).\label{eq:finite_fosd1}
\end{align}
This gives that 
\begin{align*}
     \sum_o { (p_i(o|a_i^\star(\vec f))} - p_i(o|a_i^\star(\vec f+\vec \gamma)))\cdot f_o  \geq c_i(a_i^\star(\vec f)) - c_i(a_i^\star(\vec f+\vec \gamma)).
\end{align*}
Similarly, from the fact that $a_i^\star(\vec f + \vec \gamma)$ is the optimal action under contract $\vec f + \vec \gamma$, we have
\begin{align*}
    \sum_o { (p_i(o|a_i^\star(\vec f + \vec \gamma))} - p_i(o|a_i^\star(\vec f)))\cdot (f_o+\gamma_o)  \geq c_i(a_i^\star(\vec f+\vec \gamma)) - c_i(a_i^\star(\vec f)).
\end{align*}

Summing up the above two equations, we arrive at
\begin{align*}
   \sum_o { (p_i(o|a_i^\star(\vec f + \vec \gamma))} - p_i(o|a_i^\star(\vec f)))\cdot \gamma_o\geq  0.
\end{align*}
If Equation (\ref{eq:finite_contradiction_fosd}) holds true, then by the definition of FOSD, we have $\sum_o { (p_i(o|a_i^\star(\vec f + \vec \gamma))} - p_i(o|a_i^\star(\vec f)))\cdot \gamma_o< 0$ since $\gamma_o$ is increasing, which contradicts with the above inequality. Thus we know that one always has  $a_i^\star(\vec f + \vec\gamma)$ FOSD  $a_i^\star(\vec f)$.

Next, we consider the difference $ u(\vec f^\star) - u(\vec f^\star + \vec \gamma)$, we have
\begin{align}
     u(\vec f^\star) - u(\vec f^\star + \vec \gamma) & = \mathbb{E}_{i\sim\rho}\left[\sum_{o} p_i(o|a^\star_i(\vec f^\star))(v_o-f_o^\star) - \sum_{o} p_i(o|a^\star_i(\vec f^\star+\vec \gamma))(v_o-f_o^\star-\gamma_o)\right] \\ & \leq \mathbb{E}_{i\sim\rho}\left[\sum_{o} \left(p_i(o|a^\star_i(\vec f^\star)) - p_i(o|a^\star_i(\vec f^\star+\vec \gamma))\right)\cdot (v_o-f_o^\star)\right] + \|\vec \gamma\|_\infty \\ & \leq \|\vec \gamma\|_\infty.
\end{align}
The last inequality comes from the fact that $a_i^\star(\vec f^\star)$ FOSD $a_i^\star(\vec f^\star + \vec\epsilon)$, and that $v_o- f^\star_o$ is monotone. This finishes the proof of the lemma.
\end{proof}
Now we proceed to the proof of the main theorem.  Let $ \vec f^{\star\prime} = \argmax_{\vec f\in\mathcal{S}} u(\vec f)$, where $\mathcal{S}$ is the discretized search space defined in Algorithm~\ref{alg:finite}.  We have
\begin{align*}
 \sum_{t=1}^T \mathbb{E}[u(\vec f^\star) - u(\vec f^{(t)})] \leq  &  \sum_{t=1}^T \mathbb{E}[u(\vec f^\star) -  u( \vec f^{\star\prime})]  + \sum_{t=1}^T\mathbb{E}[u(\vec f^{\star\prime}) -  u(\vec f^{(t)})] \\
 \leq & T\cdot (u(\vec f^\star) -  u( \vec f^{\star\prime})) + \sqrt{\frac{T\log(T)}{\epsilon^m}}.
\end{align*}
It suffices to bound the discretization error $u(\vec f^\star) -  u( \vec f^{\star\prime})$, which can be shown to be bounded by $m\epsilon$. For a given $\vec f^\star$, we can find a $\vec f'\in\mathcal{S}_\epsilon$ that is $m\epsilon$ close to $\vec f^\star$ in the following way: starting from the first coordinate, find the smallest $f_0' = k\epsilon$ for some integer $k$ such that $f_0'>\vec f^\star_0$. Set $\gamma_0 = f_0' - \vec f^\star_0$. We know that $\gamma_0\in[0, \epsilon]$. Next, we find the second coordinate with the smallest $f_1' = k\epsilon$ for some integer $k$ such that $f_1'>\vec f^\star_1+\gamma_0$. Set $\gamma_1 = f_1' - \vec f^\star_1$. We know that $\gamma_1\in(\gamma_0, 2\epsilon]$. Now repeat the procedure until the $m$-th coordinate. We can get some $\vec \gamma$ such that $\|\vec \gamma\|_\infty\leq m\epsilon$ and $\vec f^\star +\gamma\in\mathcal{S}_\epsilon$. Furthermore, we have
\begin{align*}
    u(\vec f^\star) -  u( \vec f^{\star\prime}) \leq u(\vec f^\star) -  u( \vec f^{\star} + \gamma) \leq m\epsilon.
\end{align*}
Finally, we obtain that 
\begin{align*}
 \sum_{t=1}^T \mathbb{E}[u(\vec f^\star) - u(\vec f^{(t)})] \leq Tm\epsilon + \sqrt{\frac{T\log(T)}{\epsilon^m}}.
\end{align*}
Taking $\epsilon = (Tm^2/\log(T))^{-1/(m+2)}$ gives the final result.
\end{proof}

\section{Upper Confidence Bound for Finite-Armed Bandits}\label{app:ucb}

In this section, we present the subroutine $\mathsf{UCB}$, which  achieves near-optimal worst-case regret for finite-armed bandits.

\begin{algorithm}[!htbp]
	\caption{Upper Confidence Bound}
	\label{alg:ucb}
	\textbf{Input:} Finite action set $\mathcal{A}$, total rounds $T$.\\ 
 Initialize $\hat r(a) = 1$ for all $a\in\mathcal{A}$.\\
\For{$t=1$ to $T$}{
Take action $a^{(t)} = \argmax_{a\in\mathcal{A}} \hat r(a)$, observe new reward $r^{(t)}$. \\
Set $T(a) = \sum_{s=1}^t 1(a = a^{(s)})$ for all $a$. Update 
\begin{align*}
    \hat r(a) = 
    \min\left(1, \frac{1}{T(a)}{\sum_{s=1}^t r^{(s)}1(a = a^{(s)})} + \sqrt{\frac{2\log(T)}{T(a)}}\right)
\end{align*}
}

\end{algorithm}

\section{Proof of Theorem~\ref{thm:general_upper}}\label{app:proof_upper}

\begin{proof}
The high-level construction of the upper bound goes as follows: we first show that $u(\vec f)$ is continuous along some directions in the space in Lemma~\ref{lem:gen_continuity}. As a direct result of Lemma~\ref{lem:gen_continuity}, for a given point $\vec f$, any point that is inside the cone with $\vec f$ as its apex will has its utility close to that of $\vec f$. Thus we use the cone to cover the whole space $\mathcal{F}$. We prove that under such covering $S_\epsilon$, the discretization error can be bounded.

We first prove the following lemma:
\begin{lemma}[Geometry of the observed channel induced by the best response]\label{lem:gen_inequality}
For any two contracts $\vec f, \vec f +\vec \gamma$ and any fixed $i$, one has 
\begin{align}
       \sum_o { (p_i(o|a_i^\star(\vec f + \vec \gamma))} - p_i(o|a_i^\star(\vec f)))\cdot \gamma_o\geq  0.
\end{align}
\end{lemma}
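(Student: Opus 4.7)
The plan is to establish the inequality by a standard revealed-preference argument: exploit the optimality of $a_i^\star(\vec f)$ under contract $\vec f$ and the optimality of $a_i^\star(\vec f+\vec\gamma)$ under contract $\vec f+\vec\gamma$, then add the two resulting inequalities. This argument already appears implicitly in the proof of Lemma~\ref{lem:fosd} and is essentially what underlies a monotone comparative statics / ``law of demand'' type conclusion.

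Concretely, first I would write down that since $a_i^\star(\vec f)$ maximizes the agent's utility under $\vec f$,
\begin{align*}
\sum_o p_i(o \mid a_i^\star(\vec f)) f_o - c_i(a_i^\star(\vec f)) \;\geq\; \sum_o p_i(o \mid a_i^\star(\vec f+\vec\gamma)) f_o - c_i(a_i^\star(\vec f+\vec\gamma)).
\end{align*}
Similarly, since $a_i^\star(\vec f+\vec\gamma)$ maximizes the agent's utility under $\vec f+\vec\gamma$,
\begin{align*}
\sum_o p_i(o \mid a_i^\star(\vec f+\vec\gamma))(f_o+\gamma_o) - c_i(a_i^\star(\vec f+\vec\gamma)) \;\geq\; \sum_o p_i(o \mid a_i^\star(\vec f))(f_o+\gamma_o) - c_i(a_i^\star(\vec f)).
\end{align*}

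Next, I would add the two inequalities. The cost terms $c_i(\cdot)$ cancel, and the terms involving $f_o$ also cancel (they appear with opposite signs on each side). What remains is exactly
\begin{align*}
\sum_o \bigl(p_i(o \mid a_i^\star(\vec f+\vec\gamma)) - p_i(o \mid a_i^\star(\vec f))\bigr)\gamma_o \;\geq\; 0,
\end{align*}
which is the claimed inequality.

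There is essentially no obstacle here; the statement is a direct consequence of the definition of best response in \eqref{eq:agent_ic}, and it does not require FOSD or any other structural assumption on $p_i$ or $c_i$. The only point to be careful about is that $a_i^\star$ is defined as an $\operatorname*{argmax}$, which I would resolve by noting that any selection from the argmax satisfies both optimality inequalities, so the conclusion holds for the chosen best responses under either contract.
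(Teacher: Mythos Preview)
Your proposal is correct and follows essentially the same approach as the paper's own proof: write down the two optimality inequalities for $a_i^\star(\vec f)$ under $\vec f$ and for $a_i^\star(\vec f+\vec\gamma)$ under $\vec f+\vec\gamma$, then add them so that the cost terms and the $f_o$ terms cancel, leaving the claimed inequality. The paper performs one intermediate rearrangement before summing, but the argument is identical in substance.
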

\begin{proof}
Recall that $a_i^\star(\vec f)$ is the optimal action under contract $\vec f$ and agent type $i$. Thus one has 
\begin{align}
    \sum_o { p_i(o|a_i^\star(\vec f))}\cdot f_o - c_i(a_i^\star(\vec f)) \geq \sum_o { p_i(o|a_i^\star(\vec f+\vec \gamma))}\cdot f_o - c_i(a_i^\star(\vec f+\vec \gamma)).
\end{align}
This gives that 
\begin{align*}
     \sum_o { (p_i(o|a_i^\star(\vec f))} - p_i(o|a_i^\star(\vec f+\vec \gamma)))\cdot f_o  \geq c_i(a_i^\star(\vec f)) - c_i(a_i^\star(\vec f+\vec \gamma)).
\end{align*}
Similarly, from that $a_i^\star(\vec f + \vec \gamma)$ is the optimal action under contract $\vec f + \vec \gamma$ and agent type $i$, we have
\begin{align*}
    \sum_o { (p_i(o|a_i^\star(\vec f + \vec \gamma))} - p_i(o|a_i^\star(\vec f)))\cdot (f_o+\gamma_o)  \geq c_i(a_i^\star(\vec f+\vec \gamma)) - c_i(a_i^\star(\vec f)).
\end{align*}
Summing up the above two equations, we arrive at
\begin{align*}
   \sum_o { (p_i(o|a_i^\star(\vec f + \vec \gamma))} - p_i(o|a_i^\star(\vec f)))\cdot \gamma_o\geq  0.
\end{align*}

\end{proof}

With the help of Lemma~\ref{lem:gen_inequality}, we can show that under some specific direction, the utility is approximately continuous.
\begin{lemma}[Continuity of the utility function]\label{lem:gen_continuity}
Let $\vec \gamma = \alpha(\vec v-\vec f) + \vec r$ for some $\alpha\in(0, 1]$. For any $f\in[0, 1]^m$, we have 
\begin{align*}
    u(\vec f) - u(\vec f+\vec \gamma) \leq 2 \left(\|\vec\gamma\|_\infty + \frac{\|\vec r\|_\infty}{\alpha}\right)\leq 2\left( \|\vec\gamma\|_2 + \frac{\|\vec r\|_2}{\alpha}\right).
\end{align*}
\end{lemma}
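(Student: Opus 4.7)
The plan is to expand $u(\vec f) - u(\vec f + \vec\gamma)$ directly from the definition in Equation~(\ref{eq:finite_expected_utility}) and then exploit the algebraic identity $\vec v - \vec f = (\vec\gamma - \vec r)/\alpha$ to re-express the ``discontinuous'' piece (the difference between the two best-response distributions) in terms of $\vec\gamma$ and $\vec r$, at which point Lemma~\ref{lem:gen_inequality} kills the $\vec\gamma$-part and the $\vec r$-part can be bounded crudely.

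Concretely, I would first write
\begin{align*}
u(\vec f) - u(\vec f + \vec\gamma)
= \mathbb{E}_{i\sim\rho}\Bigl[\sum_o\bigl(p_i(o|a_i^\star(\vec f)) - p_i(o|a_i^\star(\vec f+\vec\gamma))\bigr)(v_o - f_o)
+ \sum_o p_i(o|a_i^\star(\vec f+\vec\gamma))\,\gamma_o\Bigr],
\end{align*}
then substitute $v_o - f_o = \gamma_o/\alpha - r_o/\alpha$ into the first sum to split it into a $\vec\gamma$-weighted piece and an $\vec r$-weighted piece. Lemma~\ref{lem:gen_inequality} applied to $\vec\gamma$ gives $\sum_o (p_i(o|a_i^\star(\vec f)) - p_i(o|a_i^\star(\vec f+\vec\gamma)))\gamma_o \leq 0$, so the $\vec\gamma$-part contributes a non-positive term and can be dropped.

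What remains are two ``easy'' pieces. The $\vec r$-piece is controlled by pulling out $1/\alpha$, observing that the $\ell_1$ distance between two probability distributions is at most $2$, and using Hölder to get $|\sum_o (p_i(o|a_i^\star(\vec f)) - p_i(o|a_i^\star(\vec f+\vec\gamma)))r_o| \leq 2\|\vec r\|_\infty$. The leftover $\sum_o p_i(o|a_i^\star(\vec f+\vec\gamma))\gamma_o$ is at most $\|\vec\gamma\|_\infty$ because $p_i(\cdot|a_i^\star(\vec f+\vec\gamma))$ is a probability distribution. Adding the two bounds and enlarging $\|\vec\gamma\|_\infty$ to $2\|\vec\gamma\|_\infty$ yields the first inequality $2(\|\vec\gamma\|_\infty + \|\vec r\|_\infty/\alpha)$, and the second inequality is immediate from $\|\cdot\|_\infty \leq \|\cdot\|_2$.

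The main conceptual step is the substitution $v_o - f_o = \gamma_o/\alpha - r_o/\alpha$, which is only meaningful because of the special directional structure $\vec\gamma = \alpha(\vec v - \vec f) + \vec r$ assumed in the statement; this is exactly what lets Lemma~\ref{lem:gen_inequality} apply, since that lemma is a revealed-preference inequality in the direction of the contract perturbation. After that the bookkeeping is routine. I do not expect any genuine obstacle beyond keeping the signs straight when peeling off the non-positive $\vec\gamma$-term produced by Lemma~\ref{lem:gen_inequality}.
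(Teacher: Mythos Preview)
Your proposal is correct and follows essentially the same approach as the paper. The only cosmetic difference is the order of algebra: the paper applies Lemma~\ref{lem:gen_inequality} directly to $\gamma_o = \alpha(v_o - f_o) + r_o$ and then rearranges to isolate $\sum_o (p_i(o|a_i^\star(\vec f)) - p_i(o|a_i^\star(\vec f+\vec\gamma)))(v_o - f_o) \leq 2\|\vec r\|_\infty/\alpha$, whereas you first substitute $v_o - f_o = (\gamma_o - r_o)/\alpha$ and then invoke Lemma~\ref{lem:gen_inequality} on the $\vec\gamma$-piece; the two computations are algebraically identical and yield the same bound.
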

\begin{proof}
From Lemma~\ref{lem:gen_inequality}, we know that for $\vec \gamma = \alpha(\vec v-\vec f) + \vec r$, we have 
\begin{align}
   \sum_o { (p_i(o|a_i^\star(\vec f + \vec \gamma))} - p_i(o|a_i^\star(\vec f)))\cdot (\alpha( v_o- f_o) + r_o)\geq  0.
\end{align}
Rearranging the above formula gives us  that 
\begin{align}\label{eqn:rearranged_finite}
   \sum_o { (p_i(o|a_i^\star(\vec f ))} - p_i(o|a_i^\star(\vec f+ \vec \gamma)))\cdot ( v_o- f_o) \leq   \sum_o { (p_i(o|a_i^\star(\vec f + \vec \gamma))} - p_i(o|a_i^\star(\vec f)))\cdot  \frac{r_o}{\alpha}\leq\frac{2\|\vec r\|_\infty}{\alpha}.
\end{align}
We have
\begin{align}
     u(\vec f) - u(\vec f + \vec \gamma) & = \mathbb{E}_{i\sim\rho}\left[\sum_{o} p_i(o|a^\star_i(\vec f))(v_o-f_o) - \sum_{o} p_i(o|a^\star_i(\vec f+\vec \gamma))(v_o-f_o-\gamma_o)\right] \\ & \leq \mathbb{E}_{i\sim\rho}\left[\sum_{o} \left(p_i(o|a^\star_i(\vec f)) - p_i(o|a^\star_i(\vec f+\vec \gamma))\right)\cdot (v_o-f_o)\right] + 2\|\vec \gamma\|_\infty \\ & \leq 2\left(\frac{\|\vec r\|_\infty}{\alpha} + \|\vec \gamma\|_\infty\right).
\end{align}
The last inequality comes from Equation (\ref{eqn:rearranged_finite}). This finishes the proof of the lemma.
\end{proof}
Now we proceed to the proof of the main theorem.  Let $ \vec f' = \argmax_{\vec f\in\mathcal{S}_\epsilon} u(\vec f)$, where $\mathcal{S}_\epsilon$ is the discretized search space in Definition~\ref{def:discretization}.   We have
\begin{align*}
 \sum_{t=1}^T \mathbb{E}[u(\vec f^\star) - u(\vec f^{(t)})] \leq &  \sum_{t=1}^T \mathbb{E}[u(\vec f^\star) -  u( \vec f^{\star\prime})]  + \sum_{t=1}^T\mathbb{E}[u(\vec f^{\star\prime}) -  u(\vec f^{(t)})] 
 \\
 \leq & T\cdot (u(\vec f^\star) -  u( \vec f^{\star\prime})) + O\left(\sqrt{\frac{T\log(T)}{\epsilon^{2d(\mathcal{F})+1}}}\right).
\end{align*}
Here the last inequality is due to the guarantee for UCB algorithm. See e.g.~\citet{lattimore2020bandit}.
Now we bound the discretization error in the first term, we aim to show that for $\vec f^\star\in\mathcal{F}$, one can find one vector $\vec f^\star + \vec \gamma\in\mathcal{S}_\epsilon$ such that $ \vec \gamma = \alpha(\vec v-\vec f^\star) + \vec r$ with $\frac{\|\vec r\|_2}{\alpha} + \|\vec \gamma\|_2\leq C \epsilon \sqrt{m}$. 
Intuitively, this is correct due to that $\mathcal{S}_\epsilon$
is a valid $\epsilon\sqrt{m}$-covering of the space $[0, 1]^m$ under the distance $\|\vec \gamma \|_2 + \|\vec r\|_2/\alpha$. Below we make this argument rigorous.

First, consider the case that $\|\vec f^\star-\vec v\|_2 \leq 10\epsilon\sqrt{m}$. We have
\begin{align*}
u(\vec f^\star) =    u(\vec f^\star) - u (\vec v) = u(\vec f^\star) - u (\vec f^\star + (\vec v - \vec f^\star))\leq 2\|\vec v - \vec f^\star\|_2 \leq 20\epsilon\sqrt{m}.
\end{align*}
The second last inequality comes from Lemma~\ref{lem:gen_continuity}. This shows that when $\|\vec f^\star-\vec v\|_2 \leq 10\epsilon\sqrt{m}$, the discretization error is at most $20\epsilon\sqrt{m}$ since the optimal value is bounded by $20\epsilon\sqrt{m}$.

Next, consider the case when $\|\vec f^\star-\vec v\|_2 > 10\epsilon\sqrt{m}$.  By the definition of $\mathcal{V}_\epsilon$ in Definition~\ref{def:discretization}, we know that there always exists some unit vector $\vec \eta  \in\mathcal{V}_{\epsilon^2}(\mathcal{F})$, such that $\frac{\langle \vec \eta, \vec f^\star - \vec v\rangle}{ \| \vec f^\star - \vec v\|_2}\geq \cos(\epsilon^2)$. Now we plot the two-dimensional space spanned by vectors $\vec \eta$ and $\vec f^\star - \vec v$ as in Figure~\ref{fig:2dplot} (where we let both $\vec \eta$ and $\vec f^\star-\vec v$ originates from $\vec v$). We make a line that crosses point $E=\vec f^\star$ and perpendicular to $\vec \eta$. Let its intersection with $\vec \eta$ be the point $A$. Since $\|\vec f^\star - \vec v \|_2 >10\epsilon\sqrt{m}$, we know that $\|\overrightarrow{DA}\|_2> 10\epsilon\sqrt{m}\cos(\epsilon^2)>5\epsilon\sqrt{m}$. From the definition of $\mathcal{S}_\epsilon$, we know that one can find some point $B$ on line $DA$ such that $\|\overrightarrow{AB}\|_2\in (4\epsilon\sqrt{m}, 5\epsilon\sqrt{m})$, and that $\vec v+\overrightarrow{DB} \in \mathcal{S}_\epsilon$. Now we set $\vec \gamma = \overrightarrow{EB}$. We know that $\|\vec \gamma\|_2< \sqrt{\epsilon^4m+25\epsilon^2 m} <6\epsilon\sqrt{m}$. By setting $\vec \gamma = \overrightarrow{EC} + \overrightarrow{CB} = \alpha (\vec v - \vec f^\star) + \vec r$, we know that
\begin{align*}
    \frac{\|\vec r\|_2}{\alpha} & =  \frac{\|\vec r\|_2}{\alpha\|\vec v -\vec f^\star\|_2} \cdot  \|\vec v -\vec f^\star\|_2 = \frac{\|\overrightarrow{CB}\|_2}{\|\overrightarrow{EC}\|_2} \cdot  \|\vec v -\vec f^\star\|_2 \leq \frac{\|\overrightarrow{CB}\|_2}{\|\overrightarrow{EC}\|_2} \cdot \sqrt{m} = \sqrt{m}\tan \angle BEC< \sqrt{m}\tan \angle ABE \\ 
    & = \frac{\|\overrightarrow{AE}\|_2}{\|\overrightarrow{AB}\|_2} \cdot \sqrt{m} < \frac{\epsilon^2\sqrt{m}}{4\epsilon\sqrt{m}}\cdot \sqrt{m} < \epsilon\sqrt{m}.
\end{align*}


Combining the arguments together, we have
\begin{align*}
  u(\vec f^\star) - u (\vec f^\star +\vec \gamma) \leq 20\epsilon\sqrt{m}.
\end{align*}
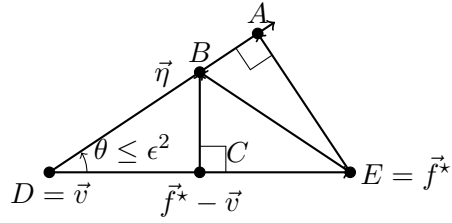
\begin{figure}[!htbp]
\centering
  \begin{tikzpicture}
    \coordinate (origo) at (0,0);
    \coordinate (f) at (4,0);
    \coordinate (A) at (2.77, 1.847);
    \coordinate (B) at (2, 1.333);
    \coordinate (C) at (2, 0);
\filldraw[black] (0,0) circle (2pt) node[below]{$D = \vec v$};
\filldraw[black] (f) circle (2pt) node[black,right] {$E = \vec f^\star$};
\filldraw[black] (A) circle (2pt) node[black,above] {$A$};
\filldraw[black] (B) circle (2pt) node[black,above] {$B$};
\filldraw[black] (C) circle (2pt) node[black,above, xshift=0.5cm] {$C$};
    \draw[thick,black,->] (origo) -- node[black, below] {  $\vec f^\star - \vec{v} $} ++(f);
    \draw[thick,black,->] (origo) --  node[black, above] {$\vec{\eta}$} ++(3,2) node (eta) [black,below] {};
    \draw[thick,black,->] (f) -- (A) {};
        \draw[thick,black,->] (f) -- (B) {};
            \draw[thick,black,->] (C) -- (B) {};

\draw [black,right angle symbol={A}{B}{f}];
\draw [black,right angle symbol={C}{B}{f}];

    \pic [draw, ->, "$\theta\leq \epsilon^2$", angle eccentricity=2.3] {angle = f--origo--eta};
  \end{tikzpicture}
  \caption{The 2 dimensional plot of the plane spanned by vector $\vec \eta$ and $\vec v - \vec f^\star$.}
\label{fig:2dplot}
\end{figure}

Finally, we have that 
\begin{align*}
 \sum_{t=1}^T \mathbb{E}[u(\vec f^\star) - u(\vec f^{(t)})] \leq O\left(T\epsilon \sqrt{m}+ \sqrt{\frac{T\log(T)}{\epsilon^{2d(\mathcal{F})+1}}}\right),
\end{align*}
and taking $\epsilon = T^{-1/(2d(\mathcal{F})+3)}$  gives the final result.

\end{proof}

\section{Proof of Theorem~\ref{thm:general_lower}}\label{app:proof_lower}

\begin{proof}
Intuitively, we would like to employ the general idea of the multiple hypothesis testing to prove the lower bound. We design exponential number of instances such that each of the instance has different optimal contracts.  Furthermore, 
for any $T$ queries $(\vec f^{(1)},\cdots, \vec f^{(T)})$, one can always find two instances such that the observation distribution is close enough such that they cannot be easily distinguished by testing, while the average regret of the two instances is large. 

To achieve this, we first divide the space $[0, 1]^m$ uniformly to  $(\lfloor 1/\epsilon\rfloor)^m$ disjoint cubes. We design $(\lfloor 1/2\epsilon\rfloor)^m$ actions such that each action is taken when the queried contract is in one of the cube. This approximately reduces the continuum-armed bandit problem to a discrete-armed bandit problem.  Then for each instance, we choose one of the action and lower its cost so that it will be chosen in a slightly larger regime, and it will lead to a larger utility function. We show that one needs exponential number of samples to distinguish between all the instances and find out where the optimal contract lies in.

Formally,
for all the problem instances, we consider agents of the same type and fix $\vec v = (0, 1, 1, \cdots, 1)$, which is always $1$ for any non-null outcome.   Let each action be indexed by a vector of integer $a(k_1, k_2,\cdots, k_m)$ with  $k_i\in\{0, 1,\cdots, \left\lfloor\frac{1}{2\epsilon}\right\rfloor-1\}$, whose production probability and cost are defined as follows
\begin{align*}
    p(k_1, k_2,\cdots, k_m) &= \left(1-\sum_{i=1}^m\frac{1}{2(1-k_i\epsilon)m}, \frac{1}{2(1-k_1\epsilon)m}, \frac{1}{2(1-k_2\epsilon)m}, \cdots, \frac{1}{2(1-k_m\epsilon)m}\right),\\
    c(k_1,k_2,\cdots, k_m) &= \frac{1}{2m} \sum_{i=1}^m \left(\frac{k_i\epsilon}{1-k_i\epsilon} - \sum_{j=0}^{k_i-1} \frac{\epsilon}{1-j\epsilon}\right).
\end{align*}
We assume that $\epsilon<0.1$ throughout the proof.
To choose action $a(k_1,k_2,\cdots, k_m)$ as the best response, the following must hold true for any $(k_1', k_2',\cdots, k_m')\neq (k_1, k_2,\cdots, k_m)$:
\begin{align*}
    p(k_1, k_2,\cdots, k_m)\cdot \vec f - c(k_1, k_2,\cdots, k_m)\geq p(k_1', k_2',\cdots, k_m')\cdot \vec f - c(k_1', k_2',\cdots, k_m').
\end{align*}
Taking the difference of both sides, notice that we can decompose the difference into the difference in each coordinate due to the design of $p$ and $c$:
\begin{align*}
    & p(k_1, k_2,\cdots, k_m)\cdot \vec f - c(k_1, k_2,\cdots, k_m) - ( p(k_1', k_2',\cdots, k_m')\cdot \vec f - c(k_1', k_2',\cdots, k_m'))  \\
    =&\sum_{l=1}^m     p(k_1, k_2,\cdots,k_l, \cdots,  k_m)\cdot \vec f - c(k_1, k_2,\cdots,k_l, \cdots,  k_m) \\ 
    & \quad - (p(k_1, k_2,\cdots,k_l', \cdots,  k_m)\cdot \vec f - c(k_1, k_2,\cdots,k_l', \cdots,  k_m)).
\end{align*}
This means that for some action to be chosen, it suffices to guarantee that all other actions that only differ in one coordinate have smaller payoff.
By enumerating all possibilities, we can see that action $a(k_1, \cdots, k_m)$ is chosen if and only if the following holds true for all coordinate $i$:
\begin{align*}
    k_i\epsilon\leq f_i-f_0 < (k_i+1)\epsilon, \text{ when } k_i < \left\lfloor\frac{1}{2\epsilon}\right\rfloor-1; \text{ or } k_i\epsilon\leq f_i-f_0 \text{ when } k_i = \left\lfloor\frac{1}{2\epsilon}\right\rfloor-1.
\end{align*}
The maximum payoff achieved by principal is $1/2$ when $f_i = k_i\epsilon, f_0 = 0$. 

Now consider  $(\lfloor 1/(8\epsilon)\rfloor)^m$ instances, indexed by $(l_1, l_2,\cdots, l_m)$ with $ l_i\in[2, 4, 6,\cdots, 2\times \lfloor 1/(8\epsilon) \rfloor]$ for all $i\in[m]$. For each of the instances, we keep the same probability distribution for each arm as $p(k_1, k_2,\cdots, k_m)$, but change the cost to the following:
\begin{align*}
    c^{(l_1,\cdots, l_m)}(k_1, k_2, \cdots, k_m) = \begin{cases}  
     c(k_1, k_2,\cdots, k_m) - \frac{\epsilon^2}{10m}, & (k_1, k_2,\cdots, k_m) = (l_1,l_2,\cdots, l_m) \\ 
    c(k_1, k_2,\cdots, k_m) - \frac{\epsilon^2}{20m}, & (k_1, k_2,\cdots, k_m) = (2 \lfloor \frac{1}{8\epsilon} \rfloor + 2,2\lfloor \frac{1}{8\epsilon} \rfloor + 2,\cdots, 2\lfloor \frac{1}{8\epsilon} \rfloor + 2) \\
    c(k_1, k_2,\cdots, k_m), & \text{otherwise} \\ 
    \end{cases}
\end{align*}
One can verify that the costs are non-negative. For the instance $(l_1, l_2,\cdots, l_m)$, the contracts for which action $a(l_1, l_2,\cdots, l_m)$ is chosen as the best response are those
inside the following set: \begin{align*}
\mathcal{C}_{(l_1,\cdots,l_m)} = \left\{\vec f \mid 
    l_i\epsilon - \frac{(1-l_i\epsilon)(1-(l_i-1)\epsilon)\epsilon}{5}\leq f_i-f_0<  (l_i+1)\epsilon + \frac{(1-l_i\epsilon)(1-(l_i+1)\epsilon)\epsilon}{5}, 1\leq i\leq m\right\}.
\end{align*}
The maximum payoff achieved is  $1/2+\sum_i  (1-(l_i-1)\epsilon)\epsilon/10m\geq 1/2 + {\epsilon}/{20}$ when $f_0=0, f_i = l_i\epsilon - \frac{(1-l_i\epsilon)(1-(l_i-1)\epsilon)\epsilon}{5}$.

Now consider another instance indexed by $(2\times \lfloor \frac{1}{8\epsilon} \rfloor + 2, \cdots, 2\times \lfloor \frac{1}{8\epsilon} \rfloor + 2)$. We keep the same probability distribution for each arm as $p(k_1, k_2,\cdots, k_m)$, but change the cost to the following:
\begin{align*}
    c^{(2\times \lfloor \frac{1}{8\epsilon} \rfloor + 2,\cdots, 2\times \lfloor \frac{1}{8\epsilon} \rfloor + 2)}(k_1, k_2, \cdots, k_m) = \begin{cases}  
    c(k_1, k_2,\cdots, k_m) - \frac{\epsilon^2}{20m}, & (k_1, k_2,\cdots, k_m) = (2\times \lfloor \frac{1}{8\epsilon} \rfloor + 2,\cdots, 2\times \lfloor \frac{1}{8\epsilon} \rfloor + 2) \\
    c(k_1, k_2,\cdots, k_m), & \text{otherwise} \\ 
    \end{cases}
\end{align*}
For this instance, the maximum payoff is $1/2 + \sum_i  (1-(2\times \lfloor \frac{1}{8\epsilon} \rfloor +1)\epsilon)\epsilon/20m\in [1/2+{\epsilon}/{30},  1/2+{\epsilon}/{25}]$ when  $f_0=0, f_i = (2\times \lfloor \frac{1}{8\epsilon} \rfloor +2)\epsilon - \frac{(1-(2\times \lfloor \frac{1}{8\epsilon} \rfloor +2)\epsilon)(1-(2\times \lfloor \frac{1}{8\epsilon} \rfloor +1)\epsilon)\epsilon}{10}, \forall i\geq 1$. Moreover, the contracts that action $a(2\times \lfloor \frac{1}{8\epsilon} \rfloor + 2, \cdots, 2\times \lfloor \frac{1}{8\epsilon} \rfloor + 2)$ is chosen as the best response are those
inside the following set: 
\begin{align*}
\mathcal{C}_{(2\times \lfloor \frac{1}{8\epsilon} \rfloor + 2, \cdots, 2\times \lfloor \frac{1}{8\epsilon} \rfloor + 2)} =& \Big\{\vec f \mid 
   (2\times \lfloor \frac{1}{8\epsilon} \rfloor +2)\epsilon - \frac{(1-(2\times \lfloor \frac{1}{8\epsilon} \rfloor +2)\epsilon)(1-(2\times \lfloor \frac{1}{8\epsilon} \rfloor +1)\epsilon)\epsilon}{20}\leq f_i-f_0, \\ 
   & f_i-f_0 <  (2\times \lfloor \frac{1}{8\epsilon} \rfloor +3)\epsilon + \frac{(1-(2\times \lfloor \frac{1}{8\epsilon} \rfloor +3)\epsilon)(1-(2\times \lfloor \frac{1}{8\epsilon} \rfloor +2)\epsilon)\epsilon}{20}, 1\leq i\leq m \Big\}.
\end{align*}

Let $\mathcal{I}'$ be the set of instances $(l_1,\cdots, l_m)$ with $1\leq l_i\leq 2\times \lfloor \frac{1}{8\epsilon} \rfloor$, $\mathcal{I}''$  be the set of $\mathcal{I}'$ unioned with  an extra instance $(2\times \lfloor \frac{1}{8\epsilon} \rfloor + 2,\cdots 2\times \lfloor \frac{1}{8\epsilon} \rfloor + 2)$.  
Let $u^{(l_1,\cdots, l_m)}$ denote the payoff for the principal under instance $(l_1,\cdots, l_m)$. 
For each of the instances, we have the following properties:
\begin{itemize}
\item The sets $\mathcal{C}_{(l_1,\cdots,l_m)}$ are disjoint for different pairs of $(l_1,\cdots,l_m)\in\mathcal{I}''$.
    \item For any instance $(l_1,\cdots, l_m)\in\mathcal{I}''$, the action $a(l_1,\cdots, l_m)$ is the only action that leads to optimal payoff for the principal. The optimal contract for instance $(l_1,\cdots,l_m)$ lies in the set $\mathcal{C}_{(l_1,\cdots,l_m)}$.
    \item  For any $(l_1,\cdots, l_m)\in\mathcal{I}''$ and $\vec f\not \in\mathcal{C}_{(l_1,\cdots, l_m)}$, the suboptimality $u^{(l_1,\cdots, l_m)}(\vec f^\star) - u^{(l_1,\cdots, l_m)}(\vec f)$ is at least $\epsilon/100.$
\end{itemize}


Now we apply Le Cam's Lemma to show the lower bound. 
From the reduction of  estimation to testing (see, e.g., Prop 15.1 of~\cite{wainwright2019high}),
we know that
\begin{align}
    \sup_{i\in \mathcal{I}} \sum_{t=1}^T\mathbb{E}[u_i(\vec f^\star)- u_i(\vec f^{(t)})]\geq  &  \sup_{(l_1,\cdots,l_m)\in\mathcal{I}''} \sum_{t=1}^T\mathbb{E}[u^{(l_1,\cdots,l_m)}(\vec f^\star)- u^{(l_1,\cdots,l_m)}(\vec f^{(t)})] \nonumber  \\ 
    \geq & \frac{1}{|\mathcal{I}''|}\sum_{(l_1,\cdots,l_m)\in\mathcal{I}''} \sum_{t=1}^T\mathbb{E}[u^{(l_1,\cdots,l_m)}(\vec f^\star)- u^{(l_1,\cdots,l_m)}(\vec f^{(t)})]  \nonumber \\ 
    \geq & \frac{\epsilon}{100}\cdot \inf_{\Psi}\sum_{t=1}^T \frac{1}{|\mathcal{I}''|} \sum_{(l_1,\cdots,l_m)\in\mathcal{I}''} \mathbb{P}_t^{(l_1,\cdots, l_m)}(\Psi\not =(l_1,\cdots,l_m)),\label{eq.testing_lower}
\end{align}
where the infimum is taken over all measurable tests $\Psi$ that maps the history observation to the indices $\mathcal{I}''$. $\mathbb{P}_t^{(l_1,\cdots, l_m)}$ is the distribution of the observations up to time $t$ under the instance $(l_1,\cdots,l_m)$. 
Now from the tree-based lower bound (see Lemma 3 in~\cite{gao2019batched}),
we have
\begin{align}
 & \inf_{\Psi} \frac{1}{|\mathcal{I}''|} \sum_{(l_1,\cdots,l_m)\in\mathcal{I}'} \mathbb{P}_t^{(l_1,\cdots, l_m)}(\Psi\not =(l_1,\cdots,l_m)) \nonumber \\
 & \geq   \frac{1}{|\mathcal{I}''|}\sum_{(l_1,\cdots,l_m)\in\mathcal{I}'} (1-\|\mathbb{P}_t^{(2\times \lfloor \frac{1}{8\epsilon} \rfloor + 2,\cdots, 2\times \lfloor \frac{1}{8\epsilon} \rfloor + 2)} -\mathbb{P}_t^{(l_1,\cdots,l_m)} \|_{\mathsf{TV}})\nonumber  \\ 
     & \geq \frac{1}{|\mathcal{I}''|}\cdot \sum_{(l_1,\cdots,l_m)\in\mathcal{I}'} \exp(-D_{\mathsf{KL}}(\mathbb{P}_t^{(2\times \lfloor \frac{1}{8\epsilon} \rfloor + 2,\cdots, 2\times \lfloor \frac{1}{8\epsilon} \rfloor + 2)}, \mathbb{P}_t^{(l_1,\cdots,l_m)} )).\label{eq.tree_lower}
     \end{align}
Now we further upper bound the KL divergence
$D_{\mathsf{KL}}(\mathbb{P}_t^{(2\times \lfloor \frac{1}{8\epsilon} \rfloor + 2,\cdots, 2\times \lfloor \frac{1}{8\epsilon} \rfloor + 2)}, \mathbb{P}_t^{(l_1,\cdots,l_m)} )$. Let $T(l_1,\cdots,l_m)$ be the number of pulls of the regime $\mathcal{C}_{(l_1,\cdots,l_m)}$ anterior to the current batch of $t$ under instance $(2\times \lfloor \frac{1}{8\epsilon} \rfloor + 2,\cdots, 2\times \lfloor \frac{1}{8\epsilon} \rfloor + 2)$. Note that the distributions of outcome in the two instances are always the same  when we query the contract that lies outside  $ \mathcal{C}_{(l_1,\cdots,l_m)}$. Moreover, we have $\sum_{(l_1,\cdots,l_m)} T_t(l_1,\cdots,l_m)\leq t$ since  the regimes $\mathcal{C}_{(l_1,\cdots,l_m)}$ do not overlap with each other and the total time step is $t$.  By the divergence decomposition lemma\footnote{Although the divergence decomposition lemma is for the finite-arm bandit problem, one can reduce our problem to finite-arm bandit problem since there are in total $(\lfloor 1/2\epsilon\rfloor) +1$ different observation distributions.}~\citep{lattimore2020bandit}, we have
    \begin{align}
   \lefteqn{\frac{1}{|\mathcal{I}''|}\cdot \sum_{(l_1,\cdots,l_m)} \exp(-D_{\mathsf{KL}}(\mathbb{P}_t^{(2\times \lfloor \frac{1}{8\epsilon} \rfloor + 2,\cdots, 2\times \lfloor \frac{1}{8\epsilon} \rfloor + 2)}, \mathbb{P}_t^{(l_1,\cdots,l_m)} ))} \nonumber \\ 
    & \stackrel{(i)}{\geq}  \frac{1}{|\mathcal{I}''|}\cdot \sum_{(l_1,\cdots,l_m)} \exp(-\sup_{j_i\in\{-1,0,1\}}D_{\mathsf{KL}}(p(l_1+j_1,\cdots,l_m+j_m), p(l_1,\cdots,l_m)) \mathbb{E}[T(l_1,\cdots, l_m)]) \nonumber \\  
    & \stackrel{(ii)}{\geq}   \frac{1}{|\mathcal{I}''|}\cdot \sum_{(l_1,\cdots,l_m)} \exp(-\sup_{j_i\in\{-1,0,1\}} D_{\mathsf{\mathcal{X}^2}}(p(l_1+j_1,\cdots,l_m+j_m), p(l_1,\cdots,l_m)) \mathbb{E}[T(l_1,\cdots, l_m)]) \nonumber \\ 
     & \stackrel{(iii)}{\geq}    \frac{1}{|\mathcal{I}''|}\cdot \sum_{(l_1,\cdots,l_m)} \exp(-C\cdot \epsilon^2 \mathbb{E}[T(l_1,\cdots, l_m)])  \nonumber \\ 
     & \stackrel{(iv)}{\geq}   \exp\left(-\frac{C}{|\mathcal{I}''|}\cdot \sum_{(l_1,\cdots,l_m)}\epsilon^2 \mathbb{E}[T(l_1,\cdots, l_m)]\right) \nonumber  \\ 
     & \stackrel{(v)}{\geq}  \exp(-C T\cdot (8\epsilon)^m\cdot \epsilon^2).\label{eq.KL_lower}
\end{align}
Here (i) is from that the observation distributions are the same under both instances when the queried contract is not in $\mathcal{C}(l_1,\cdots,l_m)$; when the queried contract is in $\mathcal{C}_{(l_1,\cdots,l_m)}$, the observed distribution  under the instance $(2\times \lfloor \frac{1}{8\epsilon} \rfloor + 2,\cdots, 2\times \lfloor \frac{1}{8\epsilon} \rfloor + 2)$  is one of the  neighborhood distributions in $p(l_1+j_1,\cdots,l_m+j_m)$, where $j_i\in\{-1,0,1\}$, while the distribution remains $p(l_1,\cdots, l_m)$ under the instance $(l_1,\cdots,l_m)$. (ii) is from that  $D_{\mathsf{KL}}$ is upper bounded by $D_{\mathcal{X}^2}$. (iii) is from the plug-in of two distributions into the $\mathcal{X}^2$-divergence. (iv) is from Jensen's inequality. (v) is from the fact that $\sum_{(l_1,\cdots,l_m)} \mathbb{E}[T_t(l_1,\cdots,l_m)]\leq t\leq T$.
Combining Equation (\ref{eq.testing_lower}), (\ref{eq.tree_lower}) and (\ref{eq.KL_lower}) gives that
\begin{align*}
      \sup_{i\in \mathcal{I}} \sum_{t=1}^T\mathbb{E}[u_i(\vec f^\star)- u_i(\vec f^{(t)})] \geq \frac{\epsilon}{100} \cdot T \cdot \exp(-C T\cdot (8\epsilon)^m\cdot \epsilon^2).
\end{align*}
By taking $\epsilon = \frac{T^{-1/(m+2)}}{8}$ we obtain
\begin{align*}
     \min_{\vec f^{(t)}} \sup_{\mathcal{I}} \sum_{t=1}^T\mathbb{E}[u(\vec f^\star)- u(\vec f^{(t)})]& \geq {C\cdot T^{1-1/(m+2)}}.
     \end{align*}
\end{proof}

\section{Proof of Theorem~\ref{thm:linear_upper}}\label{app:proof_upper_linear}

\begin{proof}[Proof of Theorem~\ref{thm:linear_upper}]
Before we prove the main result,
we first prove the following lemma, which shows that $u(\alpha)/(1-\alpha)$ is a non-decreasing  function of $\alpha$.
\begin{lemma}\label{lem:linear_property}
For any $0\leq \alpha <\alpha'\leq 1$, one has $u(\alpha)/(1-\alpha) \leq u(\alpha')/(1-\alpha')$
\end{lemma}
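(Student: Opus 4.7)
The plan is to exploit the fact that $u(\alpha)/(1-\alpha)$ equals the expected value (under the agent's best response) of the raw outcome $v(X)$, and then apply a standard revealed-preference (single-crossing) argument in $\alpha$. Concretely, by pulling the scalar $(1-\alpha)$ out of the expectation in Equation~(\ref{eq:linear_expected_utility}),
\begin{align*}
\frac{u(\alpha)}{1-\alpha} \;=\; \mathbb{E}_{i\sim\rho}\!\left[\mathbb{E}_{X\sim p_i(\cdot\mid a^\star_i(\alpha))}[v(X)]\right],
\end{align*}
so monotonicity of $u(\alpha)/(1-\alpha)$ is equivalent to the claim that for every type $i$ the quantity $V_i(\alpha):=\mathbb{E}_{X\sim p_i(\cdot\mid a^\star_i(\alpha))}[v(X)]$ is non-decreasing in $\alpha$. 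This reduction is the first step.

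Next, I would fix $i$ and $0\le \alpha<\alpha'\le 1$, and write down the two optimality inequalities defining $a_i^\star(\alpha)$ and $a_i^\star(\alpha')$ from Equation~(\ref{eq:agent_ic}):
\begin{align*}
\alpha\,V_i(\alpha) - c_i(a_i^\star(\alpha)) &\;\ge\; \alpha\,V_i(\alpha') - c_i(a_i^\star(\alpha')),\\
\alpha'\,V_i(\alpha') - c_i(a_i^\star(\alpha')) &\;\ge\; \alpha'\,V_i(\alpha) - c_i(a_i^\star(\alpha)).
\end{align*}
Adding these two inequalities cancels the cost terms, yielding $(\alpha'-\alpha)\bigl(V_i(\alpha')-V_i(\alpha)\bigr)\ge 0$, and since $\alpha'>\alpha$ we conclude $V_i(\alpha')\ge V_i(\alpha)$. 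Taking expectation over $i\sim\rho$ then gives $u(\alpha)/(1-\alpha)\le u(\alpha')/(1-\alpha')$, as required.

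There is essentially no obstacle here: the whole argument is a one-line revealed-preference computation, and the only thing to be mildly careful about is the edge case $\alpha=1$, which causes no trouble because the inequality $V_i(\alpha')\ge V_i(\alpha)$ is derived without dividing by $(1-\alpha)$ or $(1-\alpha')$, so the conclusion holds for the ratio whenever both denominators are positive (and is trivial if $\alpha'=1$, since then $u(\alpha')=0$ only if $(1-\alpha')V_i(\alpha')$ is interpreted as a limit; in the statement we only compare ratios with $\alpha<\alpha'\le 1$, and for $\alpha'=1$ the lemma is vacuous as a comparison with $+\infty$-valued right-hand side, or can be stated only for $\alpha'<1$ and then extended by monotonicity of $V_i$ itself). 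I would flag this in one sentence and otherwise present the proof in the three steps above.
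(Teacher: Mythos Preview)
Your proposal is correct and follows essentially the same approach as the paper: both factor out $(1-\alpha)$ to reduce the claim to monotonicity of $V_i(\alpha)=\mathbb{E}_{X\sim p_i(\cdot\mid a_i^\star(\alpha))}[v(X)]$, then add the two best-response optimality inequalities at $\alpha$ and $\alpha'$ to obtain $(\alpha'-\alpha)(V_i(\alpha')-V_i(\alpha))\ge 0$. Your additional remark on the $\alpha'=1$ edge case is a nice touch that the paper omits.
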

\begin{proof}
First, note that  $u(\alpha)/(1-\alpha) = \mathbb{E}_{i\sim\rho}\left[\mathbb{E}_{X\sim p_i(a^\star_i(\alpha))}[v(X)]\right]$. Thus the lemma is equivalent to showing that 
$\mathbb{E}_{i\sim\rho}\left[\mathbb{E}_{X\sim p_i(a^\star_i(\alpha))}[v(X)]\right] \leq \mathbb{E}_{i\sim\rho}\left[\mathbb{E}_{X\sim p_i(a^\star_i(\alpha'))}[v(X)]\right]$. We prove a stronger version of it, which is for any $i\in\mathcal{T}$, $\mathbb{E}_{X\sim p_i(a^\star_i(\alpha))}[v(X)] \leq \mathbb{E}_{X\sim p_i(a^\star_i(\alpha'))}[v(X)]$. 


Recall that $a_i^\star(\alpha)$ is the optimal action under contract $\alpha\cdot v(\cdot)$. Thus one has 
\begin{align}
    \mathbb{E}_{X\sim p_i(a_i^\star(\alpha))}[\alpha \cdot v(X)] - c_i(a_i^\star(\alpha)) \geq \mathbb{E}_{X\sim p_i(a_i^\star(\alpha'))}[\alpha \cdot v(X)] - c_i(a_i^\star(\alpha')).\label{eq:linear_fosd1}
\end{align}
This gives that 
\begin{align*}
     \alpha \cdot \left(\mathbb{E}_{X\sim p_i(a_i^\star(\alpha))}[ v(X)] -\mathbb{E}_{X\sim p_i(a_i^\star(\alpha'))}[v(X)] \right) \geq c_i(a_i^\star(\alpha))   - c_i(a_i^\star(\alpha')).
\end{align*}
Similarly, from that $a_i^\star(\alpha')$ is the optimal action under contract $\alpha'\cdot v(\cdot)$, we have
\begin{align*}
     \alpha' \cdot \left(\mathbb{E}_{X\sim p_i(a_i^\star(\alpha'))}[ v(X)] -\mathbb{E}_{X\sim p_i(a_i^\star(\alpha))}[v(X)] \right) \geq c_i(a_i^\star(\alpha'))   - c_i(a_i^\star(\alpha)).
\end{align*}
Summing up the above two equations, we arrive at
\begin{align*}
  (\alpha'-\alpha)\cdot   \left(\mathbb{E}_{X\sim p_i(a_i^\star(\alpha'))}[ v(X)] -\mathbb{E}_{X\sim p_i(a_i^\star(\alpha))}[v(X)] \right)\geq  0.
\end{align*}
Since $\alpha'>\alpha$, we know that the above equation is equivalent to 
\begin{align*}
  \mathbb{E}_{X\sim p_i(a_i^\star(\alpha'))}[ v(X)] \geq\mathbb{E}_{X\sim p_i(a_i^\star(\alpha))}[v(X)],
\end{align*} This finishes the proof.
\end{proof}

With Lemma~\ref{lem:linear_property}, we can show an important property about the utility function. Precisely, we have the following lemma.
\begin{lemma}\label{lem:linear_lipschitz}
Then for any $\epsilon, \alpha \geq 0 $ with $\alpha+\epsilon\leq 1$, we have
\begin{align*}
    u(\alpha) - 
    u(\alpha+\epsilon) \leq \epsilon.
\end{align*}
\end{lemma}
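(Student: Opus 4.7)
The plan is to lean directly on Lemma~\ref{lem:linear_property}, which gives monotonicity of $\alpha \mapsto u(\alpha)/(1-\alpha)$, and combine it with the trivial bound $u(\alpha)\le 1-\alpha$ coming from the definition of the principal's utility. The key observation is that from Equation~(\ref{eq:linear_expected_utility}), $u(\alpha)/(1-\alpha) = \mathbb{E}_{i\sim\rho}[\mathbb{E}_{X\sim p_i(a_i^\star(\alpha))}[v(X)]]$ is an expected value of a function bounded by $1$, hence at most $1$. So the ratio we are controlling is automatically bounded, and monotonicity of this ratio is enough to get a one-sided Lipschitz estimate on $u$ itself.

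First I would apply Lemma~\ref{lem:linear_property} to the pair $(\alpha, \alpha+\epsilon)$, yielding
\begin{align*}
    u(\alpha+\epsilon) \;\ge\; u(\alpha)\cdot \frac{1-\alpha-\epsilon}{1-\alpha},
\end{align*}
which rearranges to
\begin{align*}
    u(\alpha) - u(\alpha+\epsilon) \;\le\; u(\alpha)\cdot \frac{\epsilon}{1-\alpha}.
\end{align*}
Next I would invoke the pointwise bound $u(\alpha)\le 1-\alpha$, which holds because $(1-\alpha)v(X)\le 1-\alpha$ for every realization of $X$. Plugging this into the previous display gives $u(\alpha)-u(\alpha+\epsilon)\le \epsilon$, as desired. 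The corner case $\alpha=1$ forces $\epsilon=0$ and is trivial; for $\alpha<1$ no division issue arises.

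I do not expect any genuine obstacle here: the lemma is essentially a corollary of Lemma~\ref{lem:linear_property} once one notices the $(1-\alpha)$ factor in $u(\alpha)$ cancels the denominator that appears in the monotonicity statement. The only small subtlety is remembering to use the trivial upper bound on $u(\alpha)$ rather than trying to extract Lipschitzness from the best-response structure directly, which would fail since $a_i^\star(\cdot)$ can jump discontinuously.
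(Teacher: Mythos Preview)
Your proposal is correct and follows essentially the same argument as the paper: apply Lemma~\ref{lem:linear_property} to bound $u(\alpha)-u(\alpha+\epsilon)$ by $\epsilon\cdot u(\alpha)/(1-\alpha)$, then use $u(\alpha)\le 1-\alpha$ from the definition of $u$. Your explicit handling of the corner case $\alpha=1$ is a minor addition the paper omits.
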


\begin{proof}
From Lemma~\ref{lem:linear_property}, we know that
\begin{align*}
    u(\alpha) - 
    u(\alpha+\epsilon) \leq u(\alpha) - \frac{1-\alpha -\epsilon}{1-\alpha} u(\alpha) = \frac{\epsilon \cdot u(\alpha)}{1-\alpha}.
\end{align*}
Furthermore, from   $u(\alpha) = \mathbb{E}_{i\sim\rho}\left[\mathbb{E}_{X\sim p_i(a^\star_i(\alpha))}[(1-\alpha) v(X)]\right] $ and $\mathbb{E}_{i\sim\rho}\left[\mathbb{E}_{X\sim p_i(a^\star_i(\alpha))}[ v(X)]\right]\leq 1 $, we know that $u(\alpha)\leq 1-\alpha$. Thus one has
\begin{align*}
    u(\alpha) - 
    u(\alpha+\epsilon)\leq \epsilon.
\end{align*}
\end{proof}
Now we are ready to prove the main theorem. Let $\hat \alpha^\star = \argmax_{\alpha\in\mathcal{S}} u(\alpha)$, where $\mathcal{S}$ is the discretized search space defined in Algorithm~\ref{alg:linear}.  We have
\begin{align*}
 \sum_{t=1}^T \mathbb{E}[u(\alpha^\star) - u(\alpha^{(t)})] & \leq  \sum_{t=1}^T \mathbb{E}[u(\alpha^\star) -  u(\hat \alpha^{\star})]  + \sum_{t=1}^T\mathbb{E}[u(\hat \alpha^{\star}) -  u(\alpha^{(t)})] \\
 &
 \leq T\cdot (u(\alpha^\star) -  u(\hat \alpha^{\star})) + \sqrt{\frac{T\log(T)}{\epsilon}}.
\end{align*}
It suffices to bound the discretization error $u(\alpha^\star) -  u(\hat \alpha^{\star})$. Let $\alpha'=\argmin_{\alpha'\geq \alpha^\star, \alpha'\in\mathcal{S}}(\alpha'-\alpha)$. Then we have $\alpha' \leq \alpha +\epsilon$. Thus we have
\begin{align*}
    u(\alpha^\star) -  u(\hat \alpha^{\star}) \leq u(\alpha^\star) -  u( \alpha') \leq \epsilon.
\end{align*}
Here the first inequality is from  $\hat \alpha^\star = \argmax_{\alpha\in\mathcal{S}} u(\alpha)$ and that $\alpha'\in\mathcal{S}$, the second inequality is from Lemma~\ref{lem:linear_lipschitz}. Finally, we know that 
\begin{align*}
 \sum_{t=1}^T \mathbb{E}[u(\alpha^\star) - u(\alpha^{(t)})] \leq T\epsilon + \sqrt{\frac{T\log(T)}{\epsilon}}.
\end{align*}
Taking $\epsilon = (T/\log(T))^{-1/3}$ gives the final result. 
\end{proof}

\section{Proof of Theorem~\ref{thm:linear_lower}}\label{app:proof_lower_linear}

\begin{proof}[Proof  of Theorem~\ref{thm:linear_lower}]
The proof follows along the same lines as that of Theorem~\ref{thm:general_lower}. We first divide the space $[0, 1]$ uniformly to  $\lfloor 1/\epsilon\rfloor$ disjoint lines. We design $\lfloor 1/2\epsilon\rfloor$ actions such that each action is taken when the queried contract is in one of the lines. This approximately reduces the continuum-armed bandit problem to a discrete-armed bandit problem.  Then for each instance, we choose one of the action and lower its cost so that it will be chosen in a slightly larger regime, and it will lead to a larger utility function. 

Formally,
for all the problem instances, we consider agents of the same type and fix $\vec v = (0, 1)$, which is always $1$ for the only non-null outcome.   Let each action be indexed by a vector of integer $a(k)$ with  $k\in\{0, 1,\cdots, \left\lfloor\frac{1}{2\epsilon}\right\rfloor-1\}$, whose production probability and cost are defined as follows
\begin{align*}
    p(k) &= \left(1-\frac{1}{2(1-k\epsilon)}, \frac{1}{2(1-k\epsilon)}\right),\\
    c(k) &= \frac{1}{2}  \left(\frac{k\epsilon}{1-k\epsilon} - \sum_{j=0}^{k-1} \frac{\epsilon}{1-j\epsilon}\right).
\end{align*}
We assume that $\epsilon<0.1$ throughout the proof.
To choose action $a(k)$ as the best response, the following must hold true
\begin{align*}
    p(k)\cdot (\alpha \vec v) - c(k)\geq p(k')\cdot (\alpha \vec v) - c(k'),\forall k, k'\in\{0, 1,\cdots, \left\lfloor\frac{1}{2\epsilon}\right\rfloor-1\},  k'\neq k.
\end{align*}
We can see that this is equivalent to:
\begin{align*}
    k\epsilon\leq \alpha < (k+1)\epsilon, \text{ when } k < \left\lfloor\frac{1}{2\epsilon}\right\rfloor-1; \text{ or } k\epsilon\leq \alpha \text{ when } k = \left\lfloor\frac{1}{2\epsilon}\right\rfloor-1.
\end{align*}
The maximum payoff achieved is $1/2$ when $\alpha = k\epsilon$. 

Now consider  $\lfloor 1/(8\epsilon)\rfloor$ instances, indexed by $l\in[2, 4, 6,\cdots, 2\times \lfloor 1/(8\epsilon) \rfloor]$. For each of the instances, we keep the same probability distribution for each arm as $p(k)$, but change the cost to the following:
\begin{align*}
    c^{(l)}(k) = \begin{cases}  
     c(k) - \frac{\epsilon^2}{10}, & k=l \\ 
    c(k) - \frac{\epsilon^2}{20}, & k = 2\times \lfloor \frac{1}{8\epsilon} \rfloor+2 \\
    c(k), & \text{otherwise} \\ 
    \end{cases}
\end{align*}
Then we can see that for the instance $l$, the contracts that action $a(l)$ is chosen as the best response are those
inside the following set: \begin{align*}
\mathcal{C}_{l} = \left\{\alpha\mid   l\epsilon - \frac{(1-l\epsilon)(1-(l-1)\epsilon)\epsilon}{5}\leq \alpha<  (l+1)\epsilon + \frac{(1-l\epsilon)(1-(l+1)\epsilon)\epsilon}{5}\right\}.
\end{align*}
The maximum payoff achieved is  $1/2+ (1-(l-1)\epsilon)\epsilon/10\geq 1/2 + {\epsilon}/{20}$ when $\alpha = l\epsilon - \frac{(1-l\epsilon)(1-(l-1)\epsilon)\epsilon}{5}$.

Now consider another instance index by $2\times \lfloor \frac{1}{8\epsilon} \rfloor + 2$. We keep the same probability distribution for each arm as $p(k)$, but change the cost to the following:
\begin{align*}
    c^{(2\times \lfloor \frac{1}{8\epsilon)} \rfloor + 2}(k) = \begin{cases}  
    c(k) - \frac{\epsilon^2}{20}, & k = 2\times \lfloor \frac{1}{8\epsilon} \rfloor + 2, \\
    c(k), & \text{otherwise} \\ 
    \end{cases}
\end{align*}
For this instance, the maximum payoff is $1/2 + (1-(2\times \lfloor \frac{1}{8\epsilon} \rfloor +1)\epsilon)\epsilon/20\in [1/2+{\epsilon}/{30},  1/2+{\epsilon}/{25}]$ when  $\alpha = (2\times \lfloor \frac{1}{8\epsilon} \rfloor +2)\epsilon - \frac{(1-(2\times \lfloor \frac{1}{8\epsilon} \rfloor +2)\epsilon)(1-(2\times \lfloor \frac{1}{8\epsilon} \rfloor +1)\epsilon)\epsilon}{10}, \forall i\geq 1$. Moreover, the contracts that action $a(2\times \lfloor \frac{1}{8\epsilon} \rfloor + 2)$ is chosen as the best response are those
inside the following set: 
\begin{align*}
\mathcal{C}_{2\times \lfloor \frac{1}{8\epsilon} \rfloor + 2} =& \Big\{\alpha \mid 
   (2\times \lfloor \frac{1}{8\epsilon} \rfloor +2)\epsilon - \frac{(1-(2\times \lfloor \frac{1}{8\epsilon} \rfloor +2)\epsilon)(1-(2\times \lfloor \frac{1}{8\epsilon} \rfloor +1)\epsilon)\epsilon}{20}\leq \alpha, \\ 
   &\quad  \alpha <  (2\times \lfloor \frac{1}{8\epsilon} \rfloor +3)\epsilon + \frac{(1-(2\times \lfloor \frac{1}{8\epsilon} \rfloor +3)\epsilon)(1-(2\times \lfloor \frac{1}{8\epsilon} \rfloor +2)\epsilon)\epsilon}{20} \Big\}.
\end{align*}

Let $\mathcal{I}'$ be the set of instances $l$ with $1\leq l\leq 2\times \lfloor \frac{1}{8\epsilon} \rfloor$, $\mathcal{I}''$  be the set of $\mathcal{I}'$ unioned with  an extra instance $2\times \lfloor \frac{1}{8\epsilon} \rfloor + 2$.  
Let $u^{l}$ denote the payoff for the principal under instance $l$. 
For each of the instances, we have the following properties:
\begin{itemize}
\item The sets $\mathcal{C}_{l}$ are disjoint for different pairs of $l\in\mathcal{I}''$.
    \item For any instance $l\in\mathcal{I}''$, the action $a(l)$ is the only action that leads to optimal payoff for the principal. The optimal contract for instance $l$ lies in the set $\mathcal{C}_{l}$.
    \item  For any $l\in\mathcal{I}''$ and $\alpha\not \in\mathcal{C}_{l}$, the suboptimality $u^{l}(\alpha^\star) - u^{l}(\alpha)$ is at least $\epsilon/100.$
\end{itemize}

Now we apply Le Cam's Lemma to show the lower bound. 
From the reduction of  estimation to testing (see e.g. Prop 15.1 of~\cite{wainwright2019high}),
we know that
\begin{align}
    \sup_{i\in \mathcal{I}} \sum_{t=1}^T\mathbb{E}[u_i(\alpha^\star)- u_i(\alpha^{(t)})]\geq  &  \sup_{l\in\mathcal{I}''} \sum_{t=1}^T\mathbb{E}[u^{l}(\alpha^\star)- u^{l}(\alpha^{(t)})] \nonumber  \\ 
    \geq & \frac{1}{|\mathcal{I}''|}\sum_{l\in\mathcal{I}''} \sum_{t=1}^T\mathbb{E}[u^{l}(\alpha^\star)- u^{l}(\alpha^{(t)})]  \nonumber \\ 
    \geq & \frac{\epsilon}{100}\cdot \inf_{\Psi}\sum_{t=1}^T \frac{1}{|\mathcal{I}''|} \sum_{l\in\mathcal{I}''} \mathbb{P}_t^{l}(\Psi\not =l),\label{eq.testing_lower_linear}
\end{align}
where the infimum is taken over all measurable tests $\Psi$ that maps the history observation to the indices $\mathcal{I}''$. $\mathbb{P}_t^{l}$ is the distribution of the observations up to time $t$ under the instance $l$. 
Now from the tree based lower bound (see Lemma 3 in~\cite{gao2019batched}),
we have
\begin{align}
 \inf_{\Psi} \frac{1}{|\mathcal{I}''|} \sum_{l\in\mathcal{I}'} \mathbb{P}_t^{l}(\Psi\not =l)  & \geq   \frac{1}{|\mathcal{I}''|}\cdot \sum_{l\in\mathcal{I}'} (1-\|\mathbb{P}_t^{2\times \lfloor \frac{1}{8\epsilon} \rfloor + 2} -\mathbb{P}_t^{l} \|_{\mathsf{TV}})\nonumber  \\ 
     & \geq \frac{1}{|\mathcal{I}''|}\cdot \sum_{l\in\mathcal{I}'} \exp(-D_{\mathsf{KL}}(\mathbb{P}_t^{2\times \lfloor \frac{1}{8\epsilon} \rfloor + 2}, \mathbb{P}_t^{l} )).\label{eq.tree_lower_linear}
     \end{align}
Now we further upper bound the KL divergence
$D_{\mathsf{KL}}(\mathbb{P}_t^{2\times \lfloor \frac{1}{8\epsilon} \rfloor + 2}, \mathbb{P}_t^{l} )$. Let $T_t(l)$ be the number of pulls of the regime $\mathcal{C}_{l}$ anterior to the current batch of $t$. Note that the distributions of outcome in the two instances are always the same  when we query the contract that lies outside  $ \mathcal{C}_{l}$.  Moreover, we have $\sum_{l} T_t(l)\leq t$ since  the regimes $\mathcal{C}_{l}$ do not overlap with each other and the total time step is $t$.  By the divergence decomposition lemma~\citep{lattimore2020bandit}, we have
    \begin{align}
   \lefteqn{ \frac{1}{|\mathcal{I}''|}\cdot \sum_{l} \exp(-D_{\mathsf{KL}}(\mathbb{P}_t^{2\times \lfloor \frac{1}{8\epsilon} \rfloor + 2}, \mathbb{P}_t^{l} ))} \nonumber \\ 
    & \stackrel{(i)}{\geq}  \frac{1}{|\mathcal{I}''|}\cdot \sum_{l} \exp(-\sup_{j\in\{-1,0,1\}}D_{\mathsf{KL}}(p(l+j), p(l)) \mathbb{E}[T(l)]) \nonumber \\  
    & \stackrel{(ii)}{\geq}  \frac{1}{|\mathcal{I}''|}\cdot \sum_{l} \exp(-\sup_{j\in\{-1,0,1\}} D_{\mathsf{\mathcal{X}^2}}(p(l+j), p(l)) \mathbb{E}[T(l)]) \nonumber \\ 
    & \stackrel{(iii)}{\geq}   \frac{1}{|\mathcal{I}''|}\cdot \sum_{l} \exp(-C\cdot \epsilon^2 \mathbb{E}[T(l)])  \nonumber \\ 
     & \stackrel{(iv)}{\geq}  \exp\left(-\frac{C}{|\mathcal{I}''|}\cdot \sum_{l}\epsilon^2 \mathbb{E}[T(l)]\right) \nonumber  \\ 
      & \stackrel{(v)}{\geq} \exp(-C T\cdot 8 \epsilon^3).\label{eq.KL_lower_linear}
\end{align}
Here (i) arises from the fact that the observation distributions are the same under both instances when the queried contract is not in $\mathcal{C}(l)$; when the queried contract is in $\mathcal{C}_{l}$, the observed distribution  under the instance $2\times \lfloor \frac{1}{8\epsilon} \rfloor + 2$  is one of the  neighborhood distributions in $p(l+j)$, where $j\in\{-1,0,1\}$, while the distribution remains $p(l)$ under the instance $l$. (ii) is from that  $D_{\mathsf{KL}}$ is upper bounded by $D_{\mathcal{X}^2}$. (iii) is from the plug-in of two distributions into the $\mathcal{X}^2$-divergence. (iv) is from Jensen's inequality. (v) is from the fact that $\sum_{l} \mathbb{E}[T_t(l)]\leq t\leq T$.
Combining Equation (\ref{eq.testing_lower_linear}), (\ref{eq.tree_lower_linear}) and (\ref{eq.KL_lower_linear}) gives that
\begin{align*}
      \sup_{i\in \mathcal{I}} \sum_{t=1}^T\mathbb{E}[u_i(\alpha^\star)- u_i(\alpha^{(t)})] \geq \frac{\epsilon}{100} \cdot T \cdot \exp(-C T\cdot 8\epsilon^3).
\end{align*}
By taking $\epsilon = \frac{T^{-1/3}}{8}$ we have that
\begin{align*}
     \inf_{\pi} \sup_{\mathcal{I}} \sum_{t=1}^T\mathbb{E}[u(\alpha^\star)- u(\alpha^{(t)})]& \geq {C\cdot T^{2/3}}.
     \end{align*}
\end{proof}

\end{document}